\documentclass[a4paper,fleqn]{cas-dc}
\usepackage[numbers]{natbib}

\usepackage{amsmath,amssymb,amsfonts}
\usepackage{mathtools}
\usepackage{bm}
\usepackage{amsthm}

\usepackage{graphicx}
\usepackage{subcaption}
\usepackage{booktabs}
\usepackage{threeparttable}
\usepackage{tabularx}
\usepackage{multirow}
\usepackage{longtable}
\usepackage{tablefootnote}
\usepackage{float}
\usepackage{lscape}
\usepackage{xcolor,colortbl}
\usepackage{wrapfig}
\usepackage{stfloats}
\usepackage{placeins}
\usepackage{algorithm}
\usepackage[noend]{algpseudocode}
\usepackage{algcompatible}

\usepackage{textcomp}
\usepackage{xspace}
\usepackage{pifont}
\usepackage{calc}
\usepackage{fancyhdr}
\usepackage[shortlabels]{enumitem}
\usepackage{url}
\usepackage{varioref}
\usepackage{tikz}
\usepackage{pgfgantt}
\usepackage[normalem]{ulem}  
\usepackage{lipsum}          
\usepackage{ulem}
\usepackage{xcolor}
\usepackage{times}

\newcommand{\algrule}[1][.2pt]{\par\vskip.5\baselineskip\hrule height #1\par\vskip.5\baselineskip}
{\bfseries}{\rmfamily}
{\bfseries}{\rmfamily}
{\bfseries}{\rmfamily}
\newtheorem{definition}{Definition}

\newcommand{\as}{\ensuremath {\leftarrow}{\xspace}}
\newtheorem{theorem}{Theorem}

\newcommand{\query}{\ensuremath {\texttt{Query}{\xspace}}}
\newcommand{\respond}{\ensuremath {\texttt{Respond}{\xspace}}}
\newcommand{\request}{\ensuremath {\texttt{Request}{\xspace}}}

\newcommand{\client}{\ensuremath {\texttt{Client}{\xspace}}}
\newcommand{\ue}{\ensuremath {\texttt{UE}{\xspace}}}

\newcommand{\server}{\ensuremath {\texttt{Server}{\xspace}}}
\newcommand{\service}{\ensuremath {\texttt{Service}{\xspace}}}

\newcommand{\cmark}{\ding{51}}%
\newcommand{\xmark}{\ding{55}}%

\newcommand{\slap}{\ensuremath {\texttt{SLAPX}{\xspace}}}
\newcommand{\pol}{\ensuremath {\texttt{PoL}{\xspace}}}

\newcommand{\ND}{\ensuremath {\texttt{ND}{\xspace}}}
\newcommand{\AP}{\ensuremath {\texttt{AP}{\xspace}}}
\newcommand{\dac}{\ensuremath {\texttt{DAC}{\xspace}}}

\newcommand{\dbp}{\ensuremath {\texttt{DBP}{\xspace}}}
\newcommand{\sym}{\ensuremath {\texttt{Sym}{\xspace}}}

\newcommand{\PSD}{\ensuremath {\texttt{PSD}{\xspace}}}

\newcommand{\keygen}{\ensuremath {\texttt{KeyGen}{\xspace}}}
\newcommand{\aka}{\ensuremath {\texttt{AKA}{\xspace}}}

\newcommand{\rlrs}{\ensuremath {\texttt{RLRS}{\xspace}}}
\newcommand{\setup}{\ensuremath {\texttt{Setup}{\xspace}}}
\newcommand{\extract}{\ensuremath {\texttt{Extract}{\xspace}}}

\newcommand{\sgn}{\ensuremath {\texttt{SGN}{\xspace}}}
\newcommand{\sign}{\ensuremath {\texttt{Sign}{\xspace}}}
\newcommand{\verify}{\ensuremath {\texttt{Verify}{\xspace}}}
\newcommand{\link}{\ensuremath {\texttt{Link}{\xspace}}}
\newcommand{\revoke}{\ensuremath {\texttt{Revoke}{\xspace}}}

\newcommand{\vdf}{\ensuremath {\texttt{VDF}{\xspace}}}
\newcommand{\eval}{\ensuremath {\texttt{Eval}{\xspace}}}
\newcommand{\nymgen}{\ensuremath {\texttt{NymGen}{\xspace}}}
\newcommand{\createcred}{\ensuremath {\texttt{CreateCred}{\xspace}}}
\newcommand{\issuecred}{\ensuremath {\texttt{IssueCred}{\xspace}}}
\newcommand{\credprove}{\ensuremath {\texttt{CredProve}{\xspace}}}
\newcommand{\receivecred}{\ensuremath {\texttt{ReceiveCred}{\xspace}}}
\newcommand{\credverify}{\ensuremath {\texttt{CredVerify}{\xspace}}}
\newcommand{\getcred}{\ensuremath {\texttt{GetCred}{\xspace}}}
\newcommand{\proxverify}{\ensuremath {\texttt{ProxVerify}{\xspace}}}

\newcommand{\psd}{\ensuremath {\texttt{PSD}}}

\newcommand{\ap}{\ensuremath {\texttt{AP}}}
\newcommand{\fcc}{\ensuremath {\texttt{FCC}}}
\newcommand{\nd}{\ensuremath {\texttt{ND}}}
\newcommand{\id}{\ensuremath {\texttt{ID}}}

\newcommand{\rtt}{\ensuremath {\texttt{RTT}}}

\newcommand{\msk}{\ensuremath {\texttt{msk}}}
\newcommand{\vk}{\ensuremath {\texttt{vk}}}

\newcommand{\sk}{\ensuremath {\texttt{sk}}}
\newcommand{\dk}{\ensuremath {\texttt{dk}}}
\newcommand{\pk}{\ensuremath {\texttt{PK}}}
\newcommand{\db}{\ensuremath {\texttt{DB}}}
\newcommand{\aux}{\ensuremath {\texttt{aux}}}
\newcommand{\nym}{\ensuremath {\texttt{nym}}}
\newcommand{\cred}{\ensuremath {\texttt{cred}}}
\newcommand{\ts}{\ensuremath {\texttt{TS}}}
\newcommand{\tv}{\ensuremath {\texttt{TV}}}
\newcommand{\rss}{\ensuremath {\texttt{RSS}}}

\newcommand{\squarenum}[1]{%
  \tikz[baseline=(char.base)]\node[
    shape=rectangle,
    draw=blue,
    fill=black,
    text=white,
    inner sep=2pt,
    rounded corners=2pt,
    scale=0.8
  ] (char) {\sffamily\bfseries\small #1};%
}

\def\tsc#1{\csdef{#1}{\textsc{\lowercase{#1}}\xspace}}
\tsc{WGM}
\tsc{QE}

\begin{document}
\let\WriteBookmarks\relax
\def\floatpagepagefraction{1}
\def\textpagefraction{.001}

\shorttitle{Privacy-Preserving and Secure DB-Driven Spectrum Sharing}    

\shortauthors{Darzi et al.}  

\title [mode = title]{Privacy-Preserving and Secure Spectrum Sharing for Database-Driven Cognitive Radio Networks}


\author[1]{Saleh Darzi}
\cormark[1]
\ead{salehdarzi@usf.edu}
\affiliation[1]{organization={Bellini College of Artificial Intelligence, Cybersecurity and Computing, University of South Florida},
            city={Tampa},
            postcode={33620}, 
            state={Florida},
            country={USA}}

\author[2,3]{Gökcan Cantali}
\fnmark[2]
\ead{canl@zhaw.ch}
\affiliation[2]{organization={Zurich University of Applied Sciences (ZHAW)},
            city={Winterthur},
            postcode={8401}, 
            country={Switzerland}}
\affiliation[3]{organization={Department of Informatics, University of Zurich (UZH)},
            city={Zurich},
            postcode={8450}, 
            country={Switzerland}}

\author[1]{Attila Altay Yavuz}
\fnmark[3]
\ead{attilaayavuz@usf.edu}

\author[2]{Gürkan Gür}
\fnmark[4]
\ead{gurkan.gur@zhaw.ch}


\begin{abstract}
Database-driven cognitive radio networks (DB-CRNs) enable dynamic spectrum sharing through geolocation databases but introduce critical security and privacy challenges, including mandatory location disclosure, susceptibility to location spoofing, and denial-of-service (DoS) attacks on centralized services. Existing approaches address these issues in isolation and lack a unified, regulation-compliant solution under realistic adversarial conditions.  
In this work, we present a unified security framework for DB-CRNs that simultaneously provides location privacy, user anonymity, verifiable location, and DoS resilience. Our framework, denoted as $\slap$, enables privacy-preserving spectrum queries using delegatable anonymous credentials, supports adaptive location verification without revealing precise user location, and mitigates DoS attacks through verifiable delay functions (VDFs) combined with RLRS-based rate limiting. Extensive cryptographic benchmarking and network simulations demonstrate that $\slap$ achieves significantly lower latency and communication overhead than existing solutions while effectively resisting location spoofing and DoS attacks. These results show that $\slap$ is practical and well-suited for secure next-generation DB-CRN deployments. 
\end{abstract}



\begin{keywords}
Database-Driven Cognitive Radio Networks \sep Location Privacy \sep Anonymous Credentials \sep Location Proof \sep Counter-DoS 
\end{keywords}

\maketitle

\section{Introduction}
\label{sec:introduction}
The rapid growth of wireless technologies, driven by mobile devices, Internet-of-Things (IoT) deployments, and emerging spectrum-hungry applications, has intensified pressure on traditionally static spectrum allocation policies~\cite{agarwal2022survey,chakraborty2023capow,9321158}. To address spectrum underutilization, database-driven cognitive radio networks (DB-CRNs) and dynamic spectrum sharing have been introduced, enabling Secondary Users (SUs) to opportunistically access licensed spectrum owned by Primary Users (PUs) through regulatory-approved geolocation databases (DBs). A prominent example is the U.S. Citizens Broadband Radio Service (CBRS), which operates in the $3.5$~GHz band originally allocated to federal and satellite services~\cite{agarwal2022survey, grissa2021anonymous}. Moreover, dynamic spectrum sharing is expected to be a foundational element in future networks such as 6G systems~\cite{PATIL2024110697}.

While this paradigm significantly improves spectrum efficiency, it also introduces a wide range of security and privacy challenges. In particular, regulatory authorities (e.g., FCC and ITU) require users to continuously report sensitive operational information, including precise location, device identity, and transmission parameters, to centralized DBs. This requirement raises serious concerns regarding user privacy, anonymity, and long-term profiling~\cite{jasim2021cognitive, grissa2016efficient}. Moreover, the location-centric design of DB-CRNs makes them inherently vulnerable to location spoofing attacks, in which malicious or compromised users falsify their positions to obtain unauthorized spectrum access~\cite{xin2016privacy}. At the same time, the centralized reliance on spectrum databases, the broadcast nature of wireless communication, and the proliferation of low-cost devices further expose DB-CRNs to denial-of-service (DoS) attacks that can disrupt spectrum coordination and degrade network availability~\cite{darzi2024privacy}. 

Existing solutions for DB-CRNs largely address individual challenges in isolation and do not provide a unified, regulation-compliant framework that jointly ensures location privacy, anonymity, verifiable location, and resilience against DoS attacks under realistic adversarial conditions. 
In this work, to address those research gaps, our contributions are threefold:\vspace{-2mm}

\begin{enumerate}[leftmargin=*]
\item We develop and evaluate a novel and efficient framework, namely $\slap$, that leverages advanced cryptographic primitives to simultaneously address multiple, often conflicting, objectives in DB-CRNs. The proposed design achieves user location privacy and anonymity during spectrum querying and access while complying with enforced regulations. This paper extends our \texttt{SLAP} framework\footnote{The preliminary version of our approach, namely SLAP, was presented and published at the \textit{IEEE SVCC 2025 Conference}~\cite{darzi2025slap}.}, incorporating substantial new material and the following key enhancements and additional features. Designed for realistic adversarial settings, $\slap$ remains robust against location spoofing and DoS attacks by malicious or compromised users. The key properties of $\slap$ are outlined as follows:
\textit{(i) Efficient Location Privacy-Preserving and Anonymous Spectrum Query:} $\slap$ provides strong anonymity and location privacy throughout the spectrum access workflow while remaining fully compliant with protocol and regulatory requirements. 
\textit{(ii) Adaptive Location Proof and Spoofing Resistance for DB-CRN:} $\slap$ comprises an adaptive location proof algorithm with dual scenario support and architectural flexibility that operates under realistic network assumptions with malicious
or compromised users. 
\textit{(iii) DoS Mitigation with DB-CRN Architecture Compliance:} $\slap$ provides comprehensive DoS mitigation through a proactive defense mechanism that combines Verifiable Delay Functions (VDFs) with rate limiting enforced via the linkability properties of Revocable-iff-linked Linkable Ring Signatures (RLRSs). \vspace{-2mm}

\item We present a system model that integrates $\slap$ into a DB-CRN architecture comprising user equipment (UEs), access networks, spectrum databases, and regulatory authorities. This model provides a practical baseline for implementing and benchmarking DB-CRN security mechanisms, and serves as a foundation for evaluating future enhancements, including quantum-safe extensions. \vspace{-2mm}

\item We provide extensive simulation-based experiments and numerical analysis, which provide a blueprint for realistic  evaluation and benchmarking for further efforts regarding the investigated research questions on privacy-preserving and secure spectrum sharing.
\end{enumerate}

Our paper is organized as follows: in the next section, we present the technical preliminaries, system model, and cryptographic building blocks for our proposed $\slap$ framework. In Section~\ref{sec:threatmodel}, we construct threat and security models and attack scenarios relevant to $\slap$. Subsequently, we describe $\slap$'s core operations and the detailed algorithmic specification of the framework in Section~\ref{sec:scheme}. Then, in Section~\ref{sec:performance}, a comprehensive performance evaluation with simulations and numerical experiments is presented. We discuss the related work to position our contributions in the current body of work and concisely highlight the research gaps addressed by our work to elaborate on our contributions in Section~\ref{sec:discussion}. 
Finally, we conclude with key takeaways and future work.

\section{Preliminaries and Building Blocks}
\label{sec:prelim}
This section presents the notations, network architecture, and cryptographic building blocks of our framework. 

\textbf{Notations:}
The operators $||$, $|x|$, and $\{0,1\}^k$ denote string concatenation, the bit-length of a value $x$, and the set of $k$-bit binary strings, respectively, while $\oplus$ represents the bitwise XOR operation. The set of natural numbers is denoted by $\mathbb{N}$. For a sequence $\{x_i\}_{i=1}^{\ell}$, we write $(x_1,\ldots,x_\ell)$, and $\xleftarrow{\$}\mathcal{S}$ denotes uniform random sampling from a set $\mathcal{S}$. Let $\mathbb{G}_1$, $\mathbb{G}_2$, and $\mathbb{G}_T$ be cyclic groups of prime order $p$, and let ${e:\mathbb{G}_1 \times \mathbb{G}_2 \rightarrow \mathbb{G}_T}$ denote a bilinear pairing satisfying bilinearity and non-degeneracy. The $i$-th element of a vector $\mathbf{m}$ is denoted by $\mathbf{m}[i]$, and $h(\cdot)$ represents a cryptographically secure hash function. The security parameter is denoted by $\lambda$ and $p$ denotes a large prime. Finally, $\sk$, $\pk$, and $\id$ denote a secret key, a public key, and a user identifier (e.g., a MAC address), respectively, while $\sgn.\sign(\cdot)$ and $\sgn.\verify(\cdot)$ represent standard digital signature generation and verification algorithms.


\subsection{System Architecture: DB-CRN} \label{subsec:systemarch}

\textbf{Network Components:} The DB-CRN is a spectrum-sharing architecture in which unlicensed devices query trusted geolocation databases to obtain channel availability at their current location~\cite{PATIL2024110697}. This model was deployed in real FCC TV White Space (TVWS) systems and adopted in the Microsoft and Google database trials~\cite{caleffi2014database, das2015rfc}. The DB-CRN architecture consists of five key entities~\cite{order2023federal}: \vspace{-2mm}

\begin{itemize}[leftmargin=*] 
\item \textbf{\textit{Spectrum Regulator}:} It is the regulatory authority that defines technical rules for DB-CRN operation, like incumbent protection requirements, device reporting and registration procedures, and transmit parameters~\cite{agarwal2022survey}. This entity authorizes and oversees geolocation database administrators and provides authoritative incumbent datasets and regulatory constraints. Throughout this manuscript, the FCC is used interchangeably to denote this spectrum regulatory authority for convenience. \vspace{-2mm}
    

\item \textbf{\textit{Servers}:} They represent external network or cloud services (e.g., CRN, cloud, edge nodes) that users access after obtaining spectrum authorization~\cite{grissa2016efficient}. They do not participate in spectrum assignment; instead, they simply carry user traffic over the channels approved by the $\psd$. \vspace{-2mm}


\item \textbf{\textit{Private Spectrum Databases (PSDs)}:} PSDs are FCC-certified, cloud-based geolocation databases (e.g., Google, Microsoft, Spectrum Bridge) responsible for maintaining incumbent information, propagation constraints, and channel availability maps~\cite{chen2015protocol}. PSDs process spectrum-availability queries submitted by users and return permissible channels and power limits, synchronizing regularly according to FCC mandates~\cite{grissa2021anonymous}. \vspace{-2mm}


\item \textbf{\textit{User Equipments (UEs)}:} UEs are unlicensed devices of users (e.g., smartphones, sensors, and vehicular radios) that seek opportunistic access to licensed spectrum bands~\cite{grissa2016efficient}. Acting as SUs, they determine their location, submit channel-availability queries to PSDs, and operate strictly within the constraints specified in PSD responses to avoid interfering with PUs. Throughout this paper, the terms client and UE are used interchangeably. \vspace{-2mm}


\item \textbf{\textit{Access Points (APs)}:} APs consist of WiFi access points or cellular base stations (LTE/5G macro or small cells) that provide the IP connectivity required for UEs to reach PSDs and servers over the Internet~\cite{darzi2025slap}. APs do not manage spectrum availability but offer backhaul connectivity and, in some advanced architectures, assist with localization or timing support. \vspace{-2mm}

\end{itemize}

\textbf{Communication Flow and Protocol Stack:} In a DB-CRN, UEs communicate with PSDs over standard Internet paths to obtain spectrum availability, while servers and APs provide network connectivity and application-layer services. Communication exchanges follow an IP-based client–server model and use secure, lightweight protocols consistent with real FCC TVWS deployments and the IETF PAWS specification~\cite{liang2011cognitive}. The communication workflow consists of three primary interactions: \vspace{-2mm}

\begin{itemize}[leftmargin=*]
\item \textit{UE-to-AP Communication (Access Connectivity):} UEs attach to a nearby AP to obtain Internet connectivity. This step uses standard link-layer and access-network protocols, such as IEEE~802.11 or 3GPP RRC/NAS for LTE/5G~\cite{thippeswamy2016physical}. APs forward UE traffic to the Internet and do not participate in the decision-making process~\cite{liang2011cognitive}. \vspace{-2mm}

\item \textit{$\ue$-to-$\psd$ Communication (Spectrum Querying):} UEs submit spectrum-availability queries directly to PSDs using HTTPS over TLS to ensure confidentiality and integrity of location and device parameters. The request and response formats follow the JSON-based geolocation DB access protocol standardized by the IETF PAWS specification (RFC-7545)~\cite{liang2011cognitive}. A typical UE-to-PSD exchange includes (i) a registration or device-descriptor message, (ii) a location-based $\textit{AVAIL\_CHANNELS}$ query, and (iii) a PSD response containing permissible channels, power limits, and validity intervals. PSDs process each request using incumbent datasets and FCC rules before sending their response~\cite{thippeswamy2016physical}. \vspace{-2mm}

\item \textit{$\ue$-to-Server Communication (Application Traffic):} After obtaining spectrum authorization, UEs communicate with application or cloud Servers using standard IP protocols. These Servers do not participate in spectrum allocation; they merely receive traffic over the channels authorized by the PSD. Servers may verify spectrum-access credentials or tokens in enhanced architectures, but this is not required in the baseline DB-CRN model.\vspace{-2mm}
\end{itemize}

\begin{figure}[ht]
    \includegraphics[width=\linewidth]{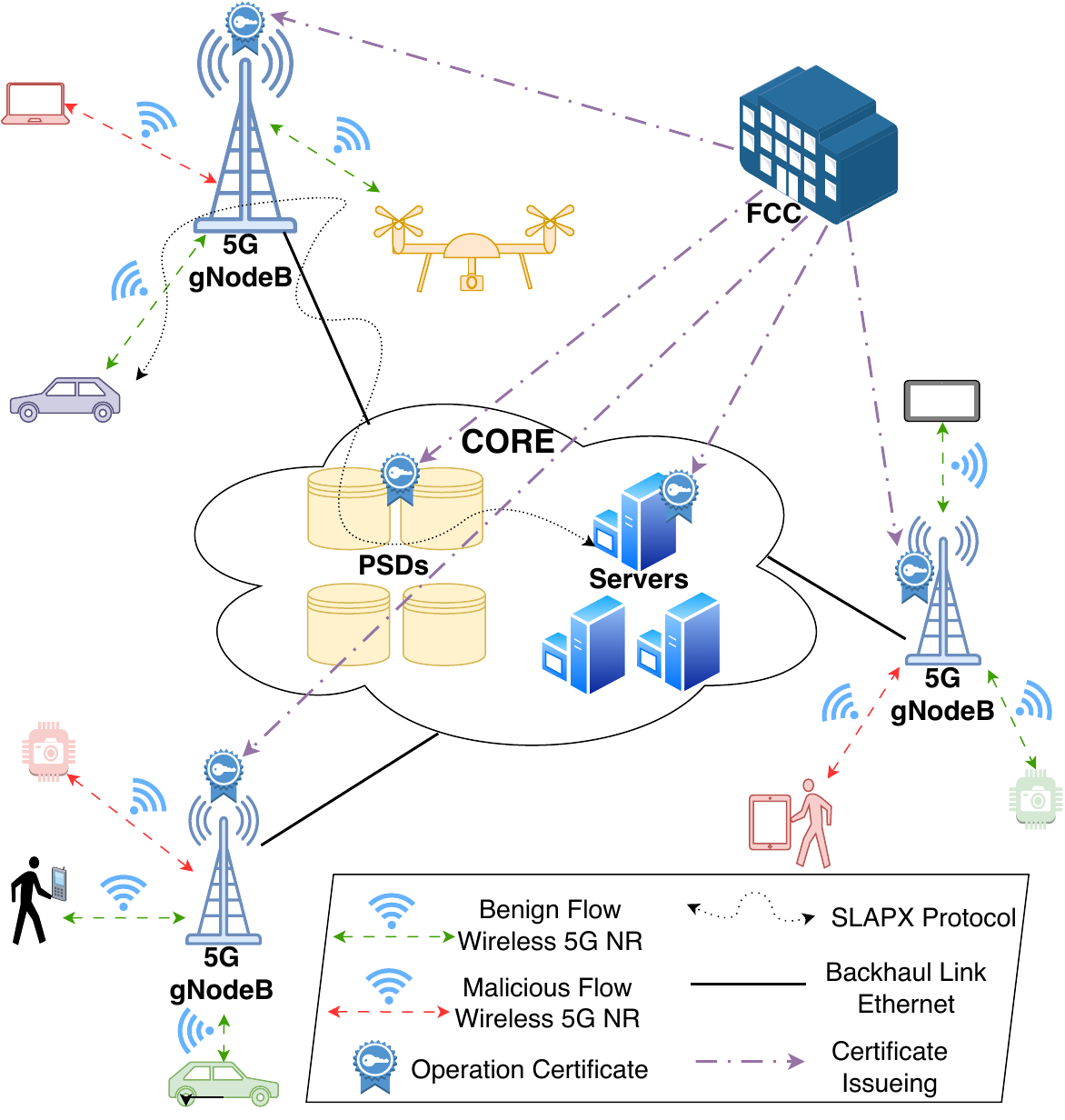}
    \caption{Overview of DB-CRN system architecture}
    \label{fig:SLAP-System-Architecture}\vspace{-3mm}
\end{figure}

\noindent The communication stack follows a standard Internet protocol hierarchy, ensuring seamless interoperability with existing devices and networks. This design enables secure, fine-grained, and location-aware spectrum access without requiring any modifications to underlying access infrastructures or access points. The overall system architecture of DB-CRN is illustrated in Figure~\ref{fig:SLAP-System-Architecture}.

\subsection{Cryptographic Building Blocks}\label{subsec:buildingblocks}
$\slap$ harnesses various cryptographic components to achieve its privacy-preserving and secure operation goals: 


\noindent~$\bullet$ \textbf{Delegatable Anonymous Credentials (${DAC}$):} We use an attribute-based DAC scheme~\cite{mir2023practical} that supports multi-level attribute issuance, controlled delegation, and anonymity. The main algorithms are summarized below (see~\cite{mir2023practical}).

\begin{definition} \label{def:DAC}
\normalfont
A Delegatable Anonymous Credential ($\dac$) scheme is defined by the following algorithms: \vspace{-2mm}
\begin{itemize}[leftmargin=*]
\setlength{\itemsep}{1pt} 
    \item[-] ${(pp, \sk_{RI}, \pk_{RI}) \as \setup(1^\lambda, 1^t, 1^\eta)}$: Given the security parameter $\lambda$, a bound $t$ on the committed attribute set size, and a delegation depth parameter ${\eta>1}$, it initializes the system by producing public parameters $pp$ and generating the root issuer’s key pairs ${(\sk_{RI}, \pk_{RI})}$ for each level in $\eta$. $pp$ is an implicit input to other algorithms. \vspace{-1mm}

    
    \item[-] ${(\pk, \sk) \as \keygen(pp)}$: Generates the user’s key pair $(\pk, \sk)$, which $\pk$ acts as the user’s initial pseudonym. \vspace{-1mm}
    
    \item[-] ${(\nym, \aux) \as \nymgen(\pk)}$: From the user's $\pk$, outputs a fresh pseudonym $\nym$ and auxiliary randomness $\aux$ to maintain unlinkability in subsequent interactions. \vspace{-1mm}
    

    \item[-] $\createcred(L', A, \sk_{RI}) \leftrightarrow$ $\getcred(\pk_{RI}, \sk_u, A)$ $\rightarrow (\cred, (\overrightarrow{C},\overrightarrow{O}), \dk_{L'})$: An interactive protocol between the root issuer (RI) and a user identified by $\nym_u$. For the attribute set $A$, the RI forms a set commitment $C$ and signs it using commitment to create a delegatable credential at $\pk_{RI}$. The user receives the credential $\cred$, the opening information $\overrightarrow{O}$, and a delegation key $\dk_{L'}$ for level $L'$. \vspace{-1mm}

    \item[-] ${\issuecred(\pk_{RI}, \dk_{L'}, \sk_u, \cred_u, A_l, L'') \leftrightarrow \texttt{Recei}}$ ${\texttt{veCred}(\pk_{RI}, \sk_r, A_l) \rightarrow (\cred_r, \dk'_{L''})}$: It enables controlled delegation from a delegator $\nym_i$ to a recipient $\nym_r$. Using their secret key $\sk_i$, public key of the RI $\pk_{RI}$, credential $\cred_i$, delegation key $\dk_{L'}$, and attribute extension $A_l$, the delegator generates a new credential $\cred_r$ for the recipient. The delegatee, using the secret key $\sk_r$, obtains the credential $\cred_r$ with the updated attribute set $A' = (A, A_l)$ and delegation depth $L'' \leq L'$, along with a new delegation key $\dk'_{L''}$ that supports further delegation if allowed. \vspace{-1mm}

    \item[-] ${\credprove(\pk_{RI}, \sk_{p}, \nym_p, \aux_p, \cred_p, D) \leftrightarrow \texttt{Cre}}$ ${\texttt{dVerify}(\pk_{RI}, \nym_p, D) \rightarrow \{0, 1\}}$: A Zero-Knowledge (ZK) presentation protocol that allows a user to anonymously prove possession of a valid credential. The prover constructs a proof using their secret key $\sk_p$, pseudonym $\nym_p$, randomness $\aux_p$, and credential $\cred_p$, revealing only the disclosed attributes $D$. The verifier checks the proof under $\pk_{RI}$ and outputs $1$ if the credential is valid, and $0$ otherwise.
\end{itemize}
\end{definition}

\noindent~$\bullet$ \textbf{Distance Bounding Protocol (DBP):}
A DBP is a two-party protocol that verifies the physical proximity of two entities by measuring round-trip delays in a fast challenge–response exchange. We employ a PK-DBP~\cite{kilincc2016efficient} based on a one-pass authenticated key-agreement protocol using nonce Diffie–Hellman to derive a session key between the prover and verifier. 
This key is then used within the OTDB symmetric-DBP~\cite{vaudenay2015private}. 

\begin{definition} \label{def:DBP}
\normalfont
A public key DBP is defined as follows: \vspace{-1mm}
\begin{itemize}[leftmargin=*]
\setlength{\itemsep}{1pt} 
    \item[-] ${(\sk, \pk) \as \keygen(1^\lambda)}$: Generates a public/secret key pair for security parameter $\lambda$.\vspace{-1mm}
    
    \item[-] ${ss \as \aka(\sk, \pk, \pk')}$: The prover and verifier derive a shared session key using their own key pairs and the other party’s public key ($\pk'$).\vspace{-1mm}
    
    \item[-] ${\{0,1\} \as \sym\dbp(ss, \textit{th})}$: An interactive proximity-test protocol using session key $ss$ and distance threshold $th$, consisting of: 
    {\em (i)} \texttt{Initialization:} The verifier samples $m \in \{0,1\}^{2n}$ and sends it to the prover, who computes $a = ss \oplus m$. 
    {\em (ii)} \texttt{Rapid Bit Exchange:} For $i=1,2,...,n$, the verifier sends a challenge $c_i \in \{0,1\}$ and the prover responds with $r_i = a_{2i+c_i-1}$, while the verifier records the round-trip time $\rtt_i$.    
    {\em (iii)} \texttt{Authentication:} Using $a = ss \oplus m$, the measured delays, and the distance bound, the verifier checks ${\rtt_i \leq 2\times \textit{th}}$ and $r_i = a_{2i+c_i-1}$. It outputs $1$ if the prover is within the threshold; otherwise $0$.
\end{itemize}
\end{definition}

\noindent~$\bullet$ \textbf{Revocable-iff-linked Linkable Ring Signature (RLRS):} Ring signatures allow a user to anonymously sign on behalf of a group. An RLRS augments this by embedding an event identifier into each signature, enabling detection of multiple signatures produced by the same signer within that context and supporting controlled anonymity revocation when necessary~\cite{au2013secure}. \vspace{-2mm} 

\begin{definition} \label{def:RLRS}
\normalfont
A Revocable-iff-linked Linkable Ring Signature ($\rlrs$) scheme is a 6-tuple algorithm as shown below: \vspace{-4mm}
\begin{itemize}[leftmargin=*]
\item[-] ${(\msk, pp_\rlrs) \as \rlrs.\setup(1^\lambda)}$: Given $\lambda$, generate the master secret key $\msk$ and the public params $pp_\rlrs$. \vspace{-3mm}

    \item[-] ${(\sk_u) \as \rlrs.\extract(pp, \id_u)}$: An interactive key-extraction protocol that, given $pp_\rlrs$ and a user identifier $\id_u \in \{0,1\}^*$, outputs the user secret key $\sk_u$. \vspace{-3mm}

    
    \item[-] ${(\sigma, \tau) \as \rlrs.\sign(\sk_u, m, \mathcal{L}_\id, e_\id)}$: To sign a message $m$ under identifier $e_\id$ with ring $\mathcal{L}_\id = \{\id_1, \dots, \id_n\}$ ($n \le t_{\max}$), the signer uses a non-interactive ZK proof to produce a signature $\sigma$ and link tag $\tau$. \vspace{-3mm}

    \item[-] ${\{0,1\} \as \rlrs.\verify(\mathcal{L}_\id, m, e_\id, \sigma, \tau)}:$ Given the ring $\mathcal{L}_\id$, message $m$, identifier $e_\id$, and signature-tag pair $(\sigma,\tau)$, output $1$ if the signature is valid, or else $0$. \vspace{-3mm}

    \item[-] ${\{0,1\} \as \rlrs.\link(\mathcal{L}_\id, e_\id, m, \sigma, \tau, m', \sigma', \tau')}:$ Determine if two signature-tag pairs $(\sigma,\tau)$ and $(\sigma',\tau')$ on messages $m$ and $m'$ (under event $e_\id$ and ring $\mathcal{L}_\id$) were produced by the same signer. Return $1$ if linked, or else $0$. \vspace{-3mm}
    
    \item[-] ${(\perp, \id_u) \as \rlrs.\revoke()}$: Given two linked signatures from events $e_\id$ and rings $\mathcal{L}_\id$ and $\mathcal{L}'_\id$, output the signer’s identity $\id_u \in \mathcal{L}_\id \cap \mathcal{L}'_\id$ if the signatures originate from the same user; output $\perp$ otherwise. 
\end{itemize}
\end{definition}

        
        
        


\noindent~$\bullet$ \textbf{Verifiable Delay Function (VDF):} A VDF~\cite{boneh2018verifiable} enforces a prescribed delay by requiring sequential computation, while enabling a fast verification of the result. We use an RSA-based VDF~\cite{wesolowski2020efficient}, which provides compact proofs and constant-time verification. \vspace{-2mm}

\begin{definition} \label{def:VDF}
\normalfont
A Verifiable Delay Function (\vdf) scheme is a 3-tuple algorithm as shown below: \vspace{-2mm}
\begin{itemize}[leftmargin=*]
	\item[-] ${(N, H, H_p) \as \vdf.\setup(1^\lambda, \kappa)}$: Given $\lambda$ and difficulty level $\kappa$, generate an RSA modulus $N$ of size $\lambda$, a hash function $H: \{0,1\}^*\rightarrow \{0,1\}^{2\kappa}$, and a prime-deriving hash $H_p(m) = \texttt{next-prime}(H(m))$, which returns the closest prime number greater than or equal to $H(m)$.  \vspace{-2mm}
    
    \item[-] ${(\ell, \pi) \as \vdf.\eval(m, \tau)}$: On input $m\in\{0,1\}^*$ and delay parameter $\tau \in \mathbb{N}$, compute $x = H(m)$ and the sequentially run $y = x^{2^\tau}\mod N$. Then compute the proof using $\ell = H_p(x+y)$ and $\pi = x^{\lfloor 2^\tau/\ell \rfloor}$. \vspace{-2mm}
    
    \item[-] ${\{0,1\} \as \vdf.\verify(x, \tau, \ell, \pi)}$: Given $x$, $\tau$, $\ell$, $\pi$, compute ${x=H(m)}$, ${r=2^\tau \mod \ell}$, and $y = \pi^\ell \times x^r \mod N$. The verifier outputs $1$ if $\ell = H_p(x+y)$, or else $0$. 
\end{itemize}
\end{definition}

\section{Threat and  Security Models and Attack Scenarios}
\label{sec:threatmodel}
This section outlines (i) the threat model and scope of our solution, defining the adversary, its capabilities, and the considered attack vectors; (ii) the security objectives and security model that underpin the formal security proof of $\slap$ as a primitive; and (iii) representative attack scenarios that motivate and guide our network-level simulations when our proposed cryptographic techniques are integrated into the network architecture.

\subsection{Threat Model and Scope}\label{subsec:threatmodel} 
Our threat model considers a probabilistic polynomial-time (PPT) adversary with control over the wireless communication channel. This adversary seeks to violate user privacy by targeting confidentiality, identity protection, message authenticity, and integrity during spectrum access, while also attempting to disrupt service through DoS attacks against PSDs and backend servers. In addition, the adversary may exploit the system by submitting fraudulent or manipulated location information. PSDs, and service servers are assumed to be honest-but-curious: they follow the prescribed protocols correctly but may attempt to infer sensitive information, such as user identity or precise location. Under this model, the adversary can mount three attack vectors commonly observed in DB-CRNs, which are captured in our threat assumptions and described in detail below: \vspace{-2mm}

\begin{itemize}[leftmargin=*]
    \item \textit{Client Privacy and Anonymity}:
    Adversaries such as PSDs, service servers, or external observers may attempt to deduce a user’s real-world identity, precise location coordinates, or device attributes by analyzing protocol messages exchanged during spectrum queries and access requests. These efforts can include passive monitoring, active interaction, or correlating metadata to uncover sensitive user-specific information. \vspace{-2mm}
    
    \item \textit{Location Spoofing Attacks}: Malicious users may attempt to bypass access restrictions by reporting incorrect location information to obtain spectrum resources or services beyond their authorized area. Such misuse can also arise from compromised or spoofed devices. Relevant threats include falsified coordinates, replayed location proofs, collusion with adversarial entities, distance fraud, relay or mafia fraud, and distance hijacking~\cite{nguyen2019spoofing}. \vspace{-2mm}

    \item \textit{Denial-of-Service Attacks}: Malicious and compromised users or external attackers may attempt to disrupt DB-CRN operations by overwhelming PSDs or service servers during spectrum queries or service requests. Such attacks can involve flooding the system with invalid, replayed, or resource-intensive requests to exhaust computational capacity, increase response latency, or deny service to legitimate users. These threats are particularly harmful in distributed and delay-sensitive spectrum access environments, where timely and reliable availability is essential.
\end{itemize}



\textbf{Scope of Our Solution:} The $\slap$ framework is designed to preserve user anonymity and location privacy during spectrum access. It protects SUs while they query spectrum availability, obtain cryptographic puzzles, and access network services, mitigating threats such as identity exposure, location spoofing, and DoS attacks. Our work concentrates exclusively on the spectrum query and access phases. Privacy risks arising during user registration or from spectrum usage patterns (e.g., utilization inference) fall outside the scope of this study. Although the framework is tailored for SUs, it can be extended to support PUs interacting with PSD servers; however, PU-related protections are not considered here. Moreover, $\slap$ does not address location leakage during ongoing spectrum use, user mobility or handover processes~\cite{gao2012location}, nor does it cover timing-based attacks, side-channel leakage, or signal-based localization methods such as triangulation~\cite{bahrak2014protecting}.


\subsection{Security Model}
Building on the system and threat models, $\slap$ is evaluated under the following security definitions:




\begin{definition}[Credential Unforgeability]\label{def:unforgeability}
A $\dac$ scheme $\cred$ is \emph{unforgeable} if no PPT adversary can cause an honest verifier to accept a credential proof for attributes it is not entitled to. Technically, $\Pr[
\credverify(\pk_{RI}, \nym^\star, D^\star)$ ${=1 \wedge D^\star \not\subseteq A_i] \leq negl(\lambda)}$, where $\mathcal{A}$ may obtain credentials via the $\createcred$ and $\issuecred$ protocols, $(\nym^\star, D^\star)$ is the adversary’s output in a showing attempt, and $A_i$ denotes any attribute set for which $\mathcal{A}$ has legitimately obtained a credential (including delegation).
\end{definition}

\begin{definition}[Credential Anonymity]\label{def:anonymity}
A $\dac$ scheme $\cred$ provides \emph{anonymity} if no PPT adversary can distinguish which of two legitimate credential holders produced a valid credential proof. Technically, $\Pr[\mathcal{A}^{\credprove(\pk_{RI})}(i_0,i_1)=$ ${1] \leq \tfrac{1}{2} + negl(\lambda)}$, where $\mathcal{A}$ is given two distinct user indices $(i_0,i_1)$ corresponding to honest users holding valid credentials, interacts with a verification oracle that executes $\credprove$ and $\credverify$ on behalf of one of the two users chosen uniformly at random, and attempts to guess which user generated the accepted proof.
\end{definition}

\begin{definition}[Credential Unlinkability]\label{def:unlinkability}
A $\dac$ scheme $\cred$ provides \emph{unlinkability} if no PPT adversary can determine whether two valid credential showings originate from the same credential holder. Technically, $\Pr[\mathcal{A}(\cred_0,\cred_1)$ ${=1] \leq \tfrac{1}{2} + negl(\lambda)}$, where $\cred_0$ and $\cred_1$ are two accepted protocol credentials generated via $\credprove$ and $\credverify$, and $\mathcal{A}$ is given oracle access to credential issuance and verification and attempts to distinguish whether $\cred_0$ and $\cred_1$ were produced by the same user or by two distinct users holding valid credentials.
\end{definition}

\begin{definition}[Correctness and Soundness of Location Verification]\label{def:locationverification}
A location verification scheme satisfies \emph{correctness} if any honest user physically located at $(l_x,l_y)$ can generate a proof of location $\pol$ that is accepted by the verifier with overwhelming probability, i.e., $\Pr[\texttt{Verify}(\pol(l_x,l_y)) =$ ${1] \ge 1-negl(\lambda)}$. The scheme satisfies \emph{soundness} if no PPT adversary $\mathcal{A}$ can produce a proof $\pol'$ that is accepted as valid for a location ${l' \neq l}$, except with negligible probability, unless $\mathcal{A}$ is physically present at $l'$, i.e., ${\Pr[\texttt{Verify}(\pol') = 1 \;\wedge\; \texttt{Loc}(\pol') = l'] \le negl(\lambda)}$. Va-lid proofs are cryptographically bound to their spatio-temporal context, ensuring non-transferability and replay resistance, and thereby preventing relay, distance-fraud, mafia-fraud, and location-hijacking attacks.
\end{definition}

\begin{definition}[Counter-DoS]\label{def:counterdos}
A system is \emph{DoS-resilient} if no PPT adversary $\mathcal{A}$ issuing at most $q(\lambda)$ requests can cause service unavailability—defined as delaying any honest request beyond a fixed bound $\Delta$—without incurring a proportional computational cost. Technically, $\Pr[
\texttt{Delay}(\mathcal{A}) >$ ${\Delta \;\wedge\; \texttt{Cost}(\mathcal{A}) < q(\lambda)\cdot\tau] \le negl(\lambda)}$, where $\tau$ denotes the enforced per-request cost determined by $\vdf$ evaluation, rate-limiting, and authenticated queries. 
\end{definition}

\subsection{Attack Scenarios}\label{subsec:attackscenarios} 
In addition to the primitive-level security analysis, we consider attack scenarios for DoS and location spoofing at the network-level. These scenarios motivate and guide the network-level security simulations of the $\slap$.



\begin{figure}[ht]
    \centering
    \begin{subfigure}{\linewidth}
        \centering
        \includegraphics[width=0.85\linewidth]{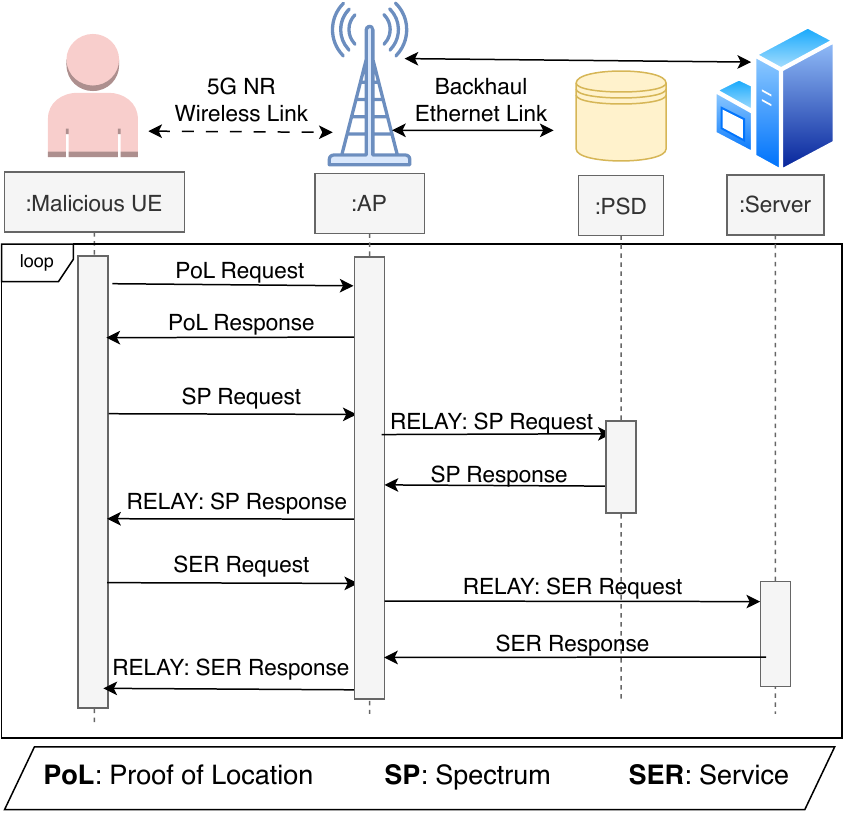}\vspace{-1mm}
        \caption{Flow of DoS Scenario 1: Full Protocol DoS}
        \label{fig:SLAP-DoS-flow-1}
    \end{subfigure}

    \begin{subfigure}{\linewidth}
        \centering
        \includegraphics[width=0.9\linewidth]{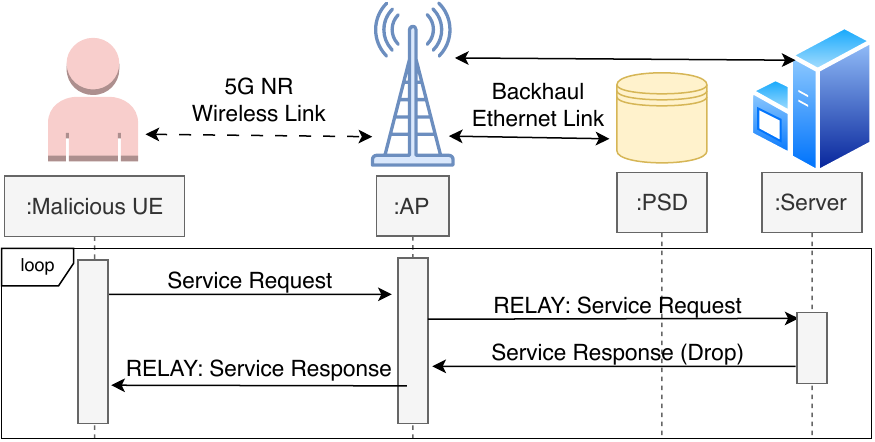}\vspace{-1mm}
        \caption{Flow of DoS Scenario 2: Computation-Bypassing DoS}
        \label{fig:SLAP-DoS-flow-2}
    \end{subfigure}

    \begin{subfigure}{\linewidth}
        \centering
        \includegraphics[width=0.90\linewidth]{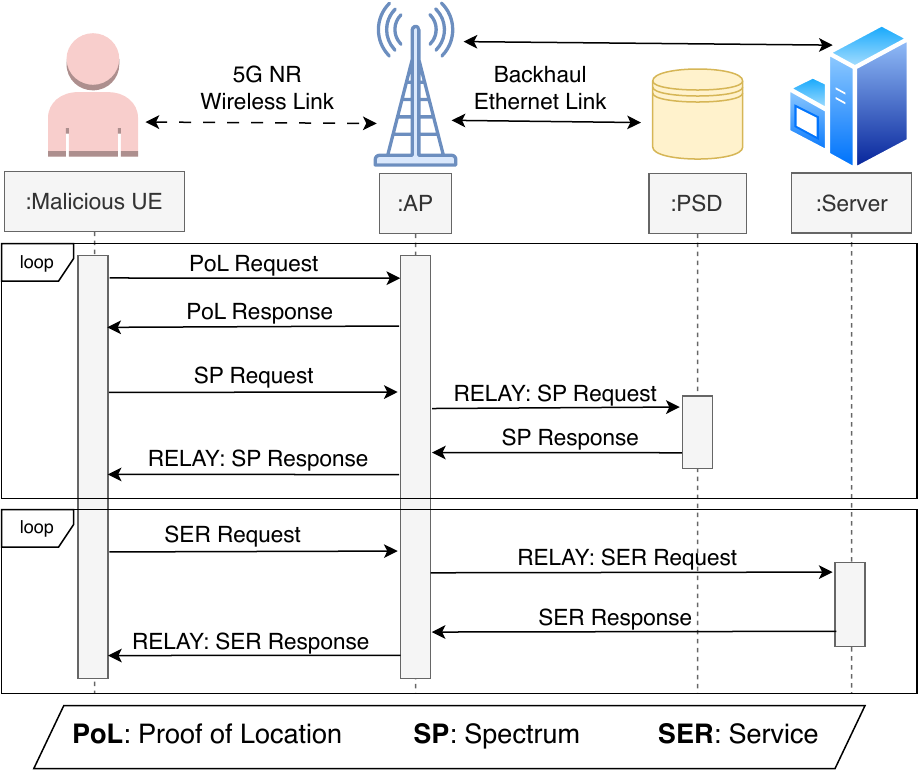}\vspace{-1mm}
        \caption{Flow of DoS Scenario 3: Proof Precomputation DoS}
        \label{fig:SLAP-DoS-flow-3}
    \end{subfigure}

    \caption{$\slap$ DoS attack scenarios 
    }
    \label{fig:SLAP-DoS-Modeling}\vspace{-4mm}
\end{figure}

\subsubsection{DoS Attack Scenarios}
We consider three representative DoS attack scenarios to capture protocol-compliant and protocol-abusive behaviors, as well as high-cost and low-cost adversarial strategies under realistic threat conditions. All scenarios are summarized in Figure~\ref{fig:SLAP-DoS-Modeling} and are analyzed in Section~\ref{subsec:simulation}. 

\noindent\textbf{1) Full Protocol DoS}: The malicious users repeatedly execute the $\slap$ algorithms with the goal of exhausting server-side computational and networking resources. While an honest user typically executes the protocol only once to obtain service access, the adversary continuously restarts the protocol to keep the servers busy. 
    
    
\noindent\textbf{2) Computation-Bypassing DoS}: The malicious users flood the  server with service requests without executing the prerequisite steps of$\slap$. As a result, these requests lack valid cryptographic proofs and are discarded upon verification. However, by bypassing expensive client-side computation, adversaries can generate a significantly higher request rate, enabling large-scale flooding attacks that stress server-side filtering and rate-limiting mechanisms.

\noindent \textbf{3) Proof Precomputation DoS}: The malicious users execute $\slap$ up to the location proof phase and precompute a batch of valid proofs within a limited time window. Since each proof embeds a timestamp and remains valid only for a bounded duration, the attacker must carefully balance the number of precomputed proofs to maximize server load while minimizing proof expiration. This scenario captures a more strategic adversary that exploits temporal validity to amplify attack impact. 

\subsubsection{Location Spoofing Attack Scenarios}
We consider two representative location spoofing scenarios derived from our location verification algorithms: (i) Distance Hijacking and (ii) Distance Fraud. They capture relay-based and protocol-level proximity violations and are evaluated in Section~\ref{subsec:simulation}. 
    
\noindent \textbf{1) Distance Hijacking}: A malicious UE that is physically far from the gNodeB attempts to impersonate a nearby device by relaying messages through a compromised benign $\ue$ within legitimate proximity. The benign UE is assumed to be honest but compromised, and thus serves as a relay for all protocol messages exchanged between the malicious UE and the gNodeB. By leveraging the strong received signal strength (RSS) of the compromised $\ue$, the attacker aims to deceive the gNodeB into inferring a shorter distance than the malicious UE’s true location. However, because the relayed communication introduces additional propagation delay, the measured round-trip time (RTT) reflects the true distance between the malicious UE and the gNodeB. This scenario allows us to study the relative effectiveness of RSS- versus RTT-based distance estimation and assess the system’s resistance to relay-based proximity attacks. The message flow of this attack is shown in Figure~\ref{fig:distance-highjacking}.
    
\noindent\textbf{2)  Distance Fraud}:  No access point or trusted infrastructure is available in the vicinity, and proximity verification relies on device-to-device DBPs with nearby nodes. A malicious UE that is physically distant attempts to authenticate itself as nearby by manipulating the DBP execution.     Since the DBP enforces proximity using both RTT constraints and challenge–response correctness, the adversary cannot reduce propagation delay below physical limits. Instead, the malicious UE attempts to guess the verifier’s challenges in advance and transmit corresponding responses prematurely, with the goal of appearing closer than it actually is. This scenario captures a protocol-level spoofing attack that does not rely on relaying, but instead exploits the probabilistic nature of challenge-response mechanisms under strict timing constraints.




\begin{figure}
    \centering
    \includegraphics[width=0.90\linewidth]{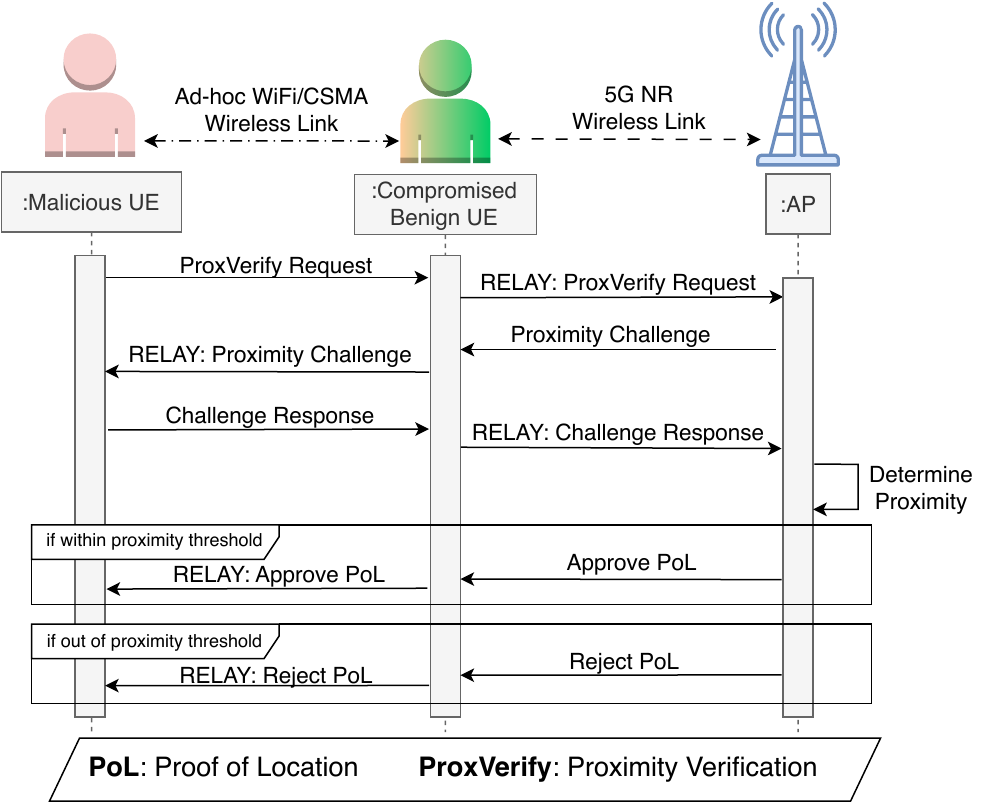}
    \caption{Distance hijacking flow using benign UE as a relay} 
    \vspace{-2mm}
    \label{fig:distance-highjacking}\vspace{-4mm}
\end{figure}





\section{$\slap$: Design and Instantiation}
\label{sec:scheme}
This section outlines the key enhancements of $\slap$ over the original SLAP protocol, and then presents the system setup, core operations, and detailed algorithmic specification of the $\slap$ framework. 

\noindent{\em \textbf{Improvements over SLAP Framework and the SOTA}:} $\slap$ extends our SLAP framework~\cite{darzi2025slap}, incorporating substantial new material and the following key enhancements and additional features:  
\emph{(i) Comprehensive DoS countermeasures:} We significantly strengthen DoS resilience by introducing on-demand, device-specific, and non-parallelizable VDFs, authenticated using lightweight standard signatures (e.g., ECDSA). This defense is further reinforced by a rate-limiting mechanism based on the linkability of event-oriented RLRSs, enabling effective DoS mitigation for both geolocation databases and service servers. 
\emph{(ii) Enhanced Proof of Location:} We enhance the location proof mechanism by leveraging RLRS-based event linkability to enforce rate limiting on proof issuance and enable revocation of APs that collude with malicious users. This provides accountability and supports more realistic adversarial network settings. 
\emph{(iii) Comprehensive performance evaluation and network simulations:} We substantially expand the evaluation by including detailed cryptographic benchmarking, comparisons with SOTA, and extensive network simulations that model real-world DoS and location spoofing scenarios under practical deployment conditions. 

$\slap$ advances the state of the art in DB-CRN security by providing the first unified, regulation-compliant framework that simultaneously achieves anonymous and location-private spectrum access, verifiable location assurance, and proactive DoS resilience under realistic adversarial conditions. Unlike prior solutions that address these challenges in isolation or rely on heavy cryptography, multiple non-colluding databases, or trusted infrastructure, $\slap$ integrates delegatable anonymous credentials, adaptive location proofs, VDF-enforced rate control, and RLRS-based accountability into a single lightweight architecture compatible with existing DB-CRN deployments. Compared to existing schemes, $\slap$ operates with a single geolocation database, incurs orders-of-magnitude lower communication overhead, and achieves substantially lower end-to-end latency, while uniquely preserving anonymity even against access points and databases. 

\subsection{$\slap$ Framework Initial Setup}
\label{subsec:initialsetup}
\noindent \textbf{DB Structure:} The DBs are organized in compliance with FCC spectrum-sharing regulations and maintain synchronized views of spectrum availability as required by the FCC. Each DB entry is indexed by location coordinates at a predetermined spatial resolution (e.g., 50 m), along with device characteristics (e.g., transmit power, device type) and access constraints (e.g., bandwidth/duration). This structured indexing ensures consistency across synchronized databases. Each record includes spectrum availability data, a validity time window, and operational parameters such as the maximum allowable transmission power (EIRP).

\noindent \textbf{Access Point Setup:} Access Points (APs) within a geographical region jointly form a ring ($\mathcal{L}_{\id}$) that contains the $\id$ of all participating APs. Each AP is provisioned with a cryptographic key pair $(\sk_\ap, \pk_\ap)$, generated by the FCC using the $\rlrs.\extract(pp, \id_\ap)$ algorithm (Definition~\ref{def:RLRS}). APs periodically broadcast time-synchronized beacons $\beta_\ts$ within predefined time windows $\ts$ to enable nearby device discovery.  
To verify UE proximity, each AP performs signal-based ranging using physical-layer measurements, including received signal strength ($\rss$) and round-trip time ($\rtt$)~\cite{abuyaghi2024positioning}. These measurements are processed via $\Delta \leftarrow \proxverify(\rss, \rtt, env_{\text{params}})$, which incorporates environment-specific factors such as path-loss models, multipath effects, and channel conditions to estimate the UE–AP distance and associated confidence bounds~\cite{lopez2019ieee}. The resulting estimate $\Delta$ determines whether the UE lies within the AP’s trusted coverage region, providing a confidence interval that underpins subsequent authenticated access decisions. Additional details are provided in~\cite{rosler2025improving}, with simulation results reported in Section~\ref{sec:performance}.


\noindent \textbf{System Setup:} In the system initialization, the FCC, acting as the core network authority, serves as the root issuer for all participants. For each registered UE, the FCC certifies a set of device attributes (e.g., identifier, transmit power) and access policies (e.g., validity period) by issuing a level-1 root credential via $\createcred(L', A, \sk_\fcc)$. A UE, represented by a pseudonym $\nym_c$, retrieves it through $\getcred(\pk_\fcc, \sk_c, A)$, obtaining $(\cred, (\overrightarrow{C}, \overrightarrow{O}), \dk_{L'})$. 
The issued credential $\cred$ binds the attribute commitment $\overrightarrow{C}$ to the FCC’s public key $\pk_\fcc$, includes the corresponding opening information $\overrightarrow{O}$, and embeds a delegation key $\dk_{L'}$ that permits controlled delegation up to level $L'$. This credential allows UEs to refresh their pseudonyms or delegate them to another party by transitioning to a new public key and, if needed, augmenting the attribute set to $A'$. Credential usage requires the UE to prove knowledge of the associated secret key $\sk_c$ and to generate a randomized signature over the relevant attributes.

\subsection{$\slap$ Framework Main Operations}
\label{subsec:mainoperations}
The high-level flow of the $\slap$ framework is depicted in Figure~\ref{fig:SLAP-benign-flow}. Moreover, a more detailed look at the internal steps of the protocol is provided in Algorithms~\ref{Alg:PoLAP}, \ref{Alg:PoLND}, and \ref{Alg:slap}. 

\begin{figure}
    \centering
    \includegraphics[width=\linewidth]{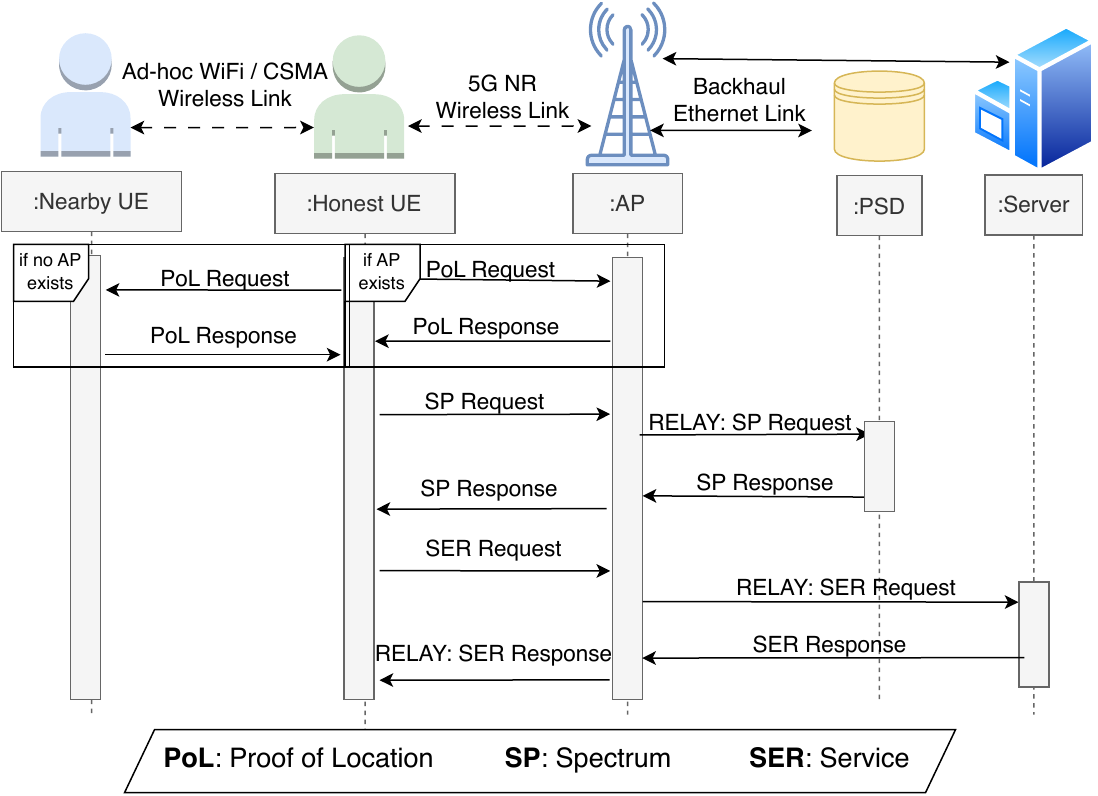}
    \caption{High-level flow of the $\slap$ protocol}
    \label{fig:SLAP-benign-flow}
    \vspace{-1em}
\end{figure}





\subsubsection{\textbf{Proof of Location Acquisition and Validation}}  Before querying PSDs or accessing network services, a UE must acquire a valid proof of location ($\pol$). This phase establishes a $\pol$ bound to a specific geographic region and time window. The process accommodates two complementary scenarios: infrastructure-rich settings with an available nearby AP, and resource-constrained or rural environments where no AP is present.

{\em (i) AP-Assisted $\pol$ Generation:} When a nearby AP is available, the client obtains a $\pol$ following Algorithm~\ref{Alg:PoLAP}. After receiving periodic beacons ($\beta_\ts$) from surrounding APs, the client selects the AP with the strongest signal and requests a $\pol$ for the current time window $\ts$. The client discloses the received beacon, its location coordinates, and the current timestamp as credential attributes and proves them anonymously to the AP. Upon validating the client’s credentials $(\nym_c, \cred_c)$ under the FCC’s public key, the AP assesses the client’s physical proximity using signal strength and RTT measurements via $\proxverify(\cdot)$. If the measured distance satisfies the predefined threshold, the AP binds the credential to the disclosed attributes, derives an event identifier, and generates an RLRS signature over the client’s attributes and credential (Steps~6–10). The signed tuple is returned to the client, who verifies the signature and accepts it as a valid proof of location $\Phi$ (Steps~11–12).

\begin{algorithm}[ht!]
	\small
	\caption{${\Phi \as \pol.\AP(\ts, (l_x,l_y), \cred_c, \beta_\ts)}$} \label{Alg:PoLAP}
	\hspace{5pt}
	\begin{algorithmic}[1]
        \vspace{-1mm}
        \Statex \hspace{-5mm}\squarenum{1} \underline{$D_\ts \as \client.\pol.\request(\beta_\ts)$}:\vspace{+1mm}
        \State Given the latest beacon $\beta_\ts$
        \State Set $D_\ts \as (\beta_\ts, (l_x, l_y), \ts)$
        \State Perform $\credprove(\pk_{\fcc}, \sk_{c}, \nym_c, \aux_c, \cred_c, D_\ts)$
        
        \algrule

        \vspace{-1mm}
        \Statex \hspace{-5mm}\squarenum{2} \underline{$(\textbf{m},\sigma_\ap, \tau_\ap) \as \ap.\pol.\respond(\pk_{\fcc}, \nym_c, \cred_c, D_\ts)$}:\vspace{+1mm}
        \State \textbf{if}~$1 \as \ap.\credverify(\pk_{\fcc}, \nym_c, D_\ts)$, \textbf{then}
        \State ~~~~$\Delta_c \as \ap.\proxverify(\rss, \rtt, env_{param})$
        \State ~~~~\textbf{if}~$(l_x, l_y) \in \Delta_c$, \textbf{then}
        \State ~~~~~~~~Set $\textbf{m}\as \{D_\ts, \nym_c, \cred_c\}$ 
        \State ~~~~~~~~Set $e_\id\as\{(l_x,l_y), \ts, \beta_\ts\}$
        \State ~~~~~~~~$(\sigma_\ap, \tau_\ap) \as \rlrs.\sign(\sk_\ap, \textbf{m}, \mathcal{L}_\id, e_\id)$ 
        \State ~~~~~~~~Send ($\textbf{m},\sigma_\ap, \tau_\ap$) to the Client.
        
	\algrule
    
    \vspace{-1mm}
        \Statex \hspace{-5mm}\squarenum{3} \underline{$\Phi \as \client.\pol.\verify(\textbf{m},\sigma_\ap, \tau_\ap)$}:\vspace{+1mm}
        \State \textbf{if}~$\{0,1\} \as \rlrs.\verify(\mathcal{L}_\id, \textbf{m}, e_\id, \sigma_\ap, \tau_\ap)$, \textbf{then}
        \State ~~~~\Return~$\Phi \as (\textbf{m}, \sigma_\ap, \tau_\ap)$
        \end{algorithmic}
\end{algorithm} 
\setlength{\textfloatsep}{0pt}

{\em (ii) Device-Assisted $\pol$ Generation:} In environments with limited infrastructure, such as rural areas without nearby WiFi APs or cellular coverage, the client relies on neighboring devices (NDs) to obtain a $\pol$ and anonymous credentials, as described in Algorithm~\ref{Alg:PoLND}. The client initiates this process by broadcasting a $\pol$ request that includes the current time and claimed location (Step~1). Upon receiving replies, the client designates its location coordinates and timestamp as the attributes to be selectively disclosed and proves them anonymously to the responding ND. Using the FCC’s public key, the ND verifies the client’s credential and, upon successful validation, establishes a shared session key $ss$ through an interactive authenticated key agreement. The ND then executes a symmetric-DBP to confirm the client’s proximity within a predefined threshold $\textit{th}$ (Steps~4–6). If the proximity check succeeds, the ND augments its attribute set with the client’s location $(l_x, l_y)$ and time window $\ts$, and issues a delegated credential with restricted delegation rights (Steps~7–10). Finally, the client retrieves the delegated credential and corresponding location proof, authenticated under the FCC’s public key. Thereby, obtaining a valid $\pol$ embedded within the extended attributes (Steps~11–13).

\begin{algorithm}[ht!]
	\small
	\caption{${(\cred'_c, A') \as \pol.\ND(\ts, (l_x,l_y), \cred_c)}$}\label{Alg:PoLND}
	\hspace{5pt}
	\begin{algorithmic}[1]
        \vspace{-1mm}
        \Statex \hspace{-5mm}\squarenum{1} \underline{$D_\ts \as \client.\pol.\request(\nym_c, \cred_c)$}:\vspace{+1mm}   
        \State Request $\pol$ and a delegated credential $\cred'_c$ from an $\nd$
        \State Set $D_\ts \as ((l_x, l_y), \ts)$
        \State Perform $\credprove(\pk_{\fcc}, \sk_{c}, \nym_c, \aux_c, \cred_c, D_\ts)$
    \algrule
	\vspace{-1mm}
        \Statex \hspace{-5mm}\squarenum{2} \underline{$D_\ts \as \nd.\pol.\respond(\beta_\ts)$}:\vspace{+1mm}
        \State \textbf{if}~$1 \as \nd.\credverify(\pk_{\fcc}, \nym_c, D_\ts)$, \textbf{then}
        \State ~~~~Perform $ss \as \aka(\sk_\nd, \pk_\nd, \pk_c)$
        \State ~~~~\textbf{if}~$1\as\sym\dbp(ss, \textit{th})$ and $(l_x,l_y) \in \textit{th}$, \textbf{then}
        \State ~~~~~~~~Set $A_l \as D = ((l_x, l_y), \ts)$
        \State ~~~~~~~~Set $dk'_{L''} := \perp$
        \State ~~~~~~~~$\issuecred(\pk_{\fcc}, \dk_{L'}, \sk_\nd, \cred_\nd, A_l, L'')$ 
        \State Send $(\cred'_c, \dk'_{L''})$ to the client.

	\algrule
    
	    \vspace{-1mm}
        \Statex \hspace{-5mm}\squarenum{3} \underline{$D_\ts \as \client.\pol.\verify(\beta_\ts)$}:\vspace{+1mm}
            \State $(\cred'_c, \dk'_{L''}) \as \receivecred(\pk_{\fcc}, \sk_c, A_l)$
            \State Set $A' \as (A, \Phi)$ where $\Phi \as A_l$
            \State \Return ($\cred'_c, A'$)
        \end{algorithmic}
\end{algorithm} 
\setlength{\textfloatsep}{0pt}

\subsubsection{\textbf{Querying Spectrum Availability and Services}} 
Algorithm~\ref{Alg:slap} describes the procedure for requesting spectrum availability and related services. Each query must be preceded by obtaining a valid $\pol$ bound to the client’s coordinates and the current timestamp.   
In the \emph{AP-assisted} $\pol$ case, the client initiates the request by anonymously proving its credential while disclosing its location, time window $\ts$, and the obtained location proof $\Phi$, and then submits a spectrum query for the surrounding area (Steps~1–5). The PSD first validates the client’s credential and then verifies the RLRS signature issued by the AP. Using the RLRS-based $\pol$ within the current time interval, the AP checks the linkability of the proof against other signatures to detect misuse such as DoS and replay (Steps~12–14). Upon successful verification and based on the disclosed device attributes (e.g., transmit power and device type), the PSD generates a VDF with an appropriate difficulty level and signs it using $\sgn$ (Steps 10-14).  
In the \emph{device-assisted} $\pol$ case, the client relies on a delegated credential that embeds the location, timestamp $\ts$, and $\Phi$ as disclosed attributes. The client proves this credential and submits a spectrum query in the same manner. After validating the credential, the PSD follows an identical process by issuing a VDF with the corresponding difficulty and signing it via $\sgn$, then returns the spectrum availability information together with the VDF and its signature to the client.

\begin{algorithm}[ht!]
	\small
	\caption{$\slap$ Scheme}\label{Alg:slap}
	\hspace{5pt}
	\begin{algorithmic}[1]

		\vspace{-1mm}
        \Statex \hspace{-5mm}\squarenum{1} \underline{$\rho_c \as \client.\query(\cred_c, \Phi_c, (l_x,l_y), \ts, freq)$}:\vspace{+1mm} 
            \State Given $(l_x, l_y)$ and $\ts$, client request $\pol$: 
            \State \textbf{If}~${\Phi_c \as \client.\pol.\AP((l_x,l_y), \ts, \cred_c, \beta_\ts)}$
            \State ~~~~Set $D_c \as ((l_x, l_y), \tv, ch, \Phi)$
            \State ~~~~Perform $\credprove(\pk_{\fcc}, \sk_{c}, \nym_c, \aux_c, \cred_c, D_c)$
            \State ~~~~Query a $\psd$ for $\rho_c \as ((l_x,l_y), ch, \tv)$     
            \State \textbf{elseif}~${(\cred'_c, A') \as \client.\pol.\ND(\cred_c, (l_x, l_y), \ts)}$
            \State ~~~~Given $\cred_c \as \cred'_c$ with $A \as (A, \Phi)$ for $((l_x,l_y), \tv)$
            \State ~~~ Perform $\credprove(\pk_{\fcc}, \sk_{c}, \nym_c, \aux_c, \cred_c, A')$
            \State ~~~ Query a $\psd$ for $\rho_c \as ((l_x,l_y), ch, \tv)$
    \algrule
		\vspace{-1mm}
        \Statex \hspace{-5mm}\squarenum{2} \underline{$R_\psd \as \psd.\respond((l_x,l_y), ch, \tv)$}:\vspace{+1mm}
            \State \textbf{If}~$1 \as \credverify(\pk_{\fcc}, \nym_c, D)$
            \State ~~~~\textbf{If}~${1 \as \rlrs.\verify(\vk, \textbf{m}=(D,\nym_c, \cred_c), \sigma_{\ap})}$
            \State ~~~~~~~~\textbf{for all}~$(m', \sigma', \tau') \in \ts$~\textbf{do}
            \State ~~~~~~~~~~~~\textbf{If}~${1\as\rlrs.\link(\mathcal{L}_\id, e_\id, m, \sigma, \tau, m', \sigma', \tau')}$
            \State ~~~~~~~~~~~~~~~~\textbf{Return}~$\perp$
            \State ~~~~~~~~~~~~\textbf{else}
            \State ~~~~~~~~~~~~~~~~Set $\Pi \as \vdf.\setup(1^\lambda, \kappa)$ accordingly
            \State ~~~~~~~~~~~~~~~~$\sigma_{\pi_\theta}\as \sgn.\sign(\sk_\psd, \pi_\theta)$
            \State ~~~~~~~~~~~~~~~~\textbf{Return}~$R_\psd \as (\Pi, \sigma_{\pi_\theta}, \textit{freq})$
            \State \textbf{elseif} $1 \as \credverify(\pk_\fcc, \nym_c, A')$
            \State ~~~~Set $\Pi \as \vdf.\setup(1^\lambda, \kappa)$ accordingly
            \State ~~~~~~~~~~~~$\sigma_{\pi_\theta}\as \sgn.\sign(\sk_\psd, \pi_\theta)$
            \State ~~~~\textbf{Return}~$R_\psd \as (\Pi, \sigma_{\pi_\theta}, \textit{freq})$
    \algrule
        \vspace{-1mm}
        \Statex \hspace{-5mm}\squarenum{3} \underline{$\{0,1\}\as\client.\service.\request(\Phi, (l_x,l_y), \ts)$}:\vspace{+1mm} 
        \Statex \vspace{-1mm} Notifying spectrum usage or requesting services to $\server$s:
	    \State Given the request/notification message as $m$
        \State $(\ell, \psi) \as \vdf.\eval(m, \tau)$
        \State Performs $\credprove(\pk_{\fcc}, \sk_{c}, \nym_c, \aux_c, \cred_c, D_c)$
        \State Send $(m, \psi)$ to $\psd$/$\server$
    \algrule
		\vspace{-1mm}
        \Statex \hspace{-5mm}\squarenum{4} \underline{$\{0,1\} \as\server.\respond(m, \psi, )$}:\vspace{+1mm} 
            \State \textbf{If}~$1 \as \credverify(\pk_{\fcc}, \nym_c, D)$
            \State ~~~~\textbf{If}~$1 \as \sgn.\verify(\pi_\theta, \sigma_{\pi_\theta})$
            \State ~~~~~~~~\textbf{If}~${1 \as \rlrs.\verify(\vk, \textbf{m}=(D,\nym_c, \cred_c), \sigma_{\ap})}$
            \State ~~~~~~~~~~~~$\PSD$/$\server$ \textbf{Return} $1$, and Grant Services.
            \State \textbf{elseif} $1 \as \credverify(\pk_\fcc, \nym_c, A')$
            \State ~~~~\textbf{If}~$1 \as \sgn.\verify(\pi_\theta, \sigma_{\pi_\theta})$
            \State $\PSD$/$\server$ \textbf{Return} $1$, and Grant Services.
        \end{algorithmic}
\end{algorithm} 
\setlength{\textfloatsep}{0pt}

\subsubsection{\textbf{Notifying Spectrum Usage and Service Requests}}  
To submit spectrum usage reports or request network services, the client first evaluates the previously issued VDF through sequential computation and derives a communication token bound to the message $m$, which encodes either usage information or a service request (Steps 19–20). The client then presents this token and anonymously proves possession of a valid credential to the server (or equivalently to the PSD) to initiate the request. Upon successful verification of the credential and validation of the VDF, the server (or PSD) checks the associated proof of location according to whether it was generated via the AP-assisted or device-assisted $\pol$ process. If all checks pass, the PSD records the reported spectrum usage, or the server authorizes access to the requested services. Distinct from prior approaches, spectrum usage reporting in $\slap$ is also protected by anonymous credentials with attribute binding, enabling higher-quality usage information while remaining compliant with FCC coexistence requirements.

\vspace{+3mm}
\subsection{Security Analysis}
\label{subsec:security}
Based on the established threat model, we formalize the security model and present the corresponding security analysis. We state the main security theorem for $\slap$, while the complete security model, formal definitions, and detailed proofs are provided in Appendix~\ref{sec:appendix_security}.

\begin{theorem} \label{the:slpaTheorem1}
The $\slap$ framework achieves the following security guarantees:  
(i) anonymous user authentication, ensured by the anonymity, soundness, and unforgeability of the underlying ZKPoK and SPSEQ-UC signature schemes;  
(ii) location privacy, provided by the unlinkability of credential-based signatures;  
(iii) verifiable user location, enforced through either the unforgeability of RLRS combined with signal-based proximity measurements or the security of public-key distance-bounding protocols and anonymous credential delegation; and  
(iv) resilience to DoS attacks, enabled by VDF and rate-limiting mechanisms.
\end{theorem}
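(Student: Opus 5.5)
The plan is a modular, game-based reduction. Each of the four guarantees in Theorem~\ref{the:slpaTheorem1} is matched to one security definition from Section~\ref{sec:threatmodel}. For each, I would consider an adversary $\mathcal{A}$ that breaks the property in the full $\slap$ execution (Algorithms~\ref{Alg:PoLAP}, \ref{Alg:PoLND}, \ref{Alg:slap}). From $\mathcal{A}$ I would build a simulator $\mathcal{B}$ that breaks one of the underlying primitives: the $\dac$ of Definition~\ref{def:DAC} (instantiated with SPSEQ-UC and ZKPoK as in~\cite{mir2023practical}), the $\rlrs$ of Definition~\ref{def:RLRS}, the PK-DBP of Definition~\ref{def:DBP}, the $\vdf$ of Definition~\ref{def:VDF}, or the standard $\sgn$. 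The argument runs through a sequence of hybrid games. In each hop, the real value produced by one primitive is replaced by a simulated one, so the distinguishing advantage is bounded by that primitive's advantage plus a union bound over the polynomially many sessions.

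\emph{Anonymous authentication, (i).} $\mathcal{B}$ receives $\pk_{RI}$ from the $\dac$ challenger and sets $\pk_\fcc := \pk_{RI}$. It answers $\createcred$ and $\issuecred$ queries through the challenger's oracles. Every $\credprove$ transcript in the $\pol$, query and service phases is simulated with the ZKPoK simulator.

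For unforgeability, an accepted showing with $D^\star \not\subseteq A_i$ is exactly a forgery in the sense of Definition~\ref{def:unforgeability}. By soundness of the ZKPoK, we extract a valid SPSEQ-UC signature on a fresh equivalence class.

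For anonymity, the two-user challenge of Definition~\ref{def:anonymity} is embedded into one chosen session. This is possible because the verifier's entire view is the credential showing on $(\nym,D)$; the $\rlrs$ signature and the $\vdf$ token leak nothing further.

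\emph{Location privacy, (ii).} A hybrid argument over sessions replaces each fresh $(\nym,\aux)\as\nymgen(\pk)$ and each randomized credential with one belonging to an independent user. Each hop reduces to the unlinkability of Definition~\ref{def:unlinkability}. One point must be argued explicitly. The $\rlrs$ event identifier $e_\id$ links only APs' signatures within a single $(\ts,(l_x,l_y))$ window, so any linking it enables involves AP signers and never reveals the UE. The coordinates disclosed to the PSD are coarse cells, as required by regulation.

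\emph{Verifiable location, (iii).} This splits into two cases.

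\emph{AP-assisted case.} Suppose the PSD accepts a $\Phi$ for $l' \neq l$. Either some honest AP signed $\textbf{m}$ containing $l'$, or $\mathcal{A}$ forged an $\rlrs$ signature. In the first case the AP's $\proxverify$ passed; the RTT lower bound, together with the relay-delay argument from the distance-hijacking scenario, forces physical presence except with the stated measurement-error probability. The second case reduces to $\rlrs$ unforgeability. A colluding AP is handled by linkability and $\rlrs.\revoke$: two signatures under the same $e_\id$ expose the AP's $\id$, which gives accountability rather than prevention.

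\emph{Device-assisted case.} Acceptance requires a delegated credential whose attributes include $(l',\ts)$. This reduces either to $\dac$ unforgeability (delegation without a valid $\dk$) or to DBP soundness. Distance fraud succeeds with probability at most $(3/4)^n$ per the OTDB analysis~\cite{vaudenay2015private}. Mafia fraud reduces to the security of the $\aka$ session key $ss$.

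Replay resistance in both cases follows because $\ts$ and the beacon $\beta_\ts$ are bound into the signed message and the credential attributes.

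\emph{DoS resilience, (iv).} I would show that each request the server actually processes carries three things: a valid $\sgn$ signature from the PSD on fresh VDF parameters, a correct $\vdf$ output, and a non-linked $\rlrs$ tag. Producing an accepted token without $\tau$ sequential squarings breaks the sequentiality of the Wesolowski VDF~\cite{wesolowski2020efficient}. Reusing a token or tag is caught by $\rlrs.\link$. Using foreign parameters breaks $\sgn$ unforgeability. Hence $\texttt{Cost}(\mathcal{A}) \ge q\cdot\tau$ with overwhelming probability, as Definition~\ref{def:counterdos} requires. Requests in the computation-bypassing scenario are rejected at a cost of one cheap signature check.

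\emph{Main obstacle.} I expect (iii) to be hardest. Soundness there is not purely cryptographic: it depends on the physical-layer assumption that $\proxverify$ and the RTT bounds are reliable. The definition also must exclude colluding APs, or else weaken the guarantee to accountability. I would first state an explicit physical assumption, namely that RTT cannot be shortened and that the $\proxverify$ error is bounded by $\epsilon_{phy}$. I would then prove soundness up to $\epsilon_{phy} + \mathsf{Adv}^{\rlrs}_{\text{unf}} + \mathsf{Adv}^{\dac}_{\text{unf}} + (3/4)^n$.
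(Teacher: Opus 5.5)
Your four-part decomposition and the primitives you reduce to match the paper's proof. The paper uses SPSEQ-UC/ZKPoK/DDH for (i)--(ii); RLRS unforgeability and linkability, $\aka$, the $(3/4)^n$ bound and SPSEQ-UC/NIZK soundness for (iii); and ECDSA EUF-CMA, VDF sequentiality and RLRS event-linkability for (iv). Your explicit $\epsilon_{phy}$ term improves on the paper, which simply asserts that soundness ``follows directly'' from Definition~\ref{def:locationverification}.

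The step that fails is the device-assisted branch of (iii). Your dichotomy ``$\dac$ forgery (delegation without a valid $\dk$) or break of DBP soundness'' does not cover all cases. In the system setup, every registered UE receives a delegation key $\dk_{L'}$ together with its level-1 credential. That key is exactly what lets any UE act as the $\nd$ in Algorithm~\ref{Alg:PoLND}. A malicious UE can therefore run $\issuecred$/$\receivecred$ towards a fresh key pair of its own, or use a colluding neighbor (the threat model explicitly allows collusion). It can set $A_l=((l'_x,l'_y),\ts)$ for any $l'$ without ever executing $\aka$ or $\sym\dbp$. The resulting credential is \emph{legitimately} obtained through delegation, so Definition~\ref{def:unforgeability} is not violated and there is nothing to extract. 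The PSD only checks $\credverify(\pk_{\fcc},\nym_c,A')$, whose inputs do not identify the delegator, and unlike the AP branch there is no RLRS accountability to fall back on. This attack is a direct counterexample to soundness on that branch. The paper's own reduction (a forged delegated $\pol^\star$ yields an SPSEQ-UC forgery or breaks NIZK soundness) has the same hole. You need an extra assumption or mechanism, such as honest NDs, delegation keys issued only to certified anchor devices, or an accountable (e.g.\ RLRS-signed) ND attestation, and your bound must be conditioned on it.

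The same branch weakens your argument for (iv). On that branch, Algorithm~\ref{Alg:slap} has the PSD generate and sign VDF parameters after a single $\credverify$, and the request carries no RLRS signature. So your claim that every processed request carries a ``non-linked $\rlrs$ tag'' is false there. Credential unlinkability then leaves the PSD no other way to rate-limit repeated queries, and those queries cost the client no VDF work at query time.

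Separately, VDF sequentiality charges $\tau$ only per \emph{distinct} evaluated input. The input is the client-chosen $m$, and the server in Algorithm~\ref{Alg:slap} neither runs $\rlrs.\link$ nor keeps a replay cache. So the protocol does not support your statement that ``reuse is caught by $\rlrs.\link$''. To obtain $\texttt{Cost}(\mathcal{A})\ge q\cdot\tau$ you need an explicit freshness mechanism, such as a verifier nonce inside the VDF input or server-side deduplication, and you should state and use it. The paper's argument silently assumes this as well.
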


Additionally, Section~\ref{subsec:simulation} presents a network-level security analysis of $\slap$, including simulation-based evaluations of its resilience to various DoS and location spoofing attack scenarios in wireless network settings.

\section{Performance Evaluation}
\label{sec:performance}
This section provides a comprehensive evaluation of the $\slap$ framework, covering both cryptographic primitive-level and network-level performance analysis.

\subsection{Primitive-Level Evaluation} \label{subsec:SalehEval} We evaluate the primitive-level performance of $\slap$ by first describing the evaluation metrics and implementation setup, followed by an empirical analysis of the framework and a comparison with state-of-the-art schemes.

\subsubsection{Configuration and Experimental Setup}
\label{subsec:Configuration}

\noindent \textbf{Hardware:} We evaluate the efficiency of $\slap$ on a standard desktop platform equipped with an Intel Core $\textit{i9-11900K}$ $@ 3.50~GHz$, $64~\text{GiB}$ RAM, $1~\text{TB}$ SSD running Ubuntu $22.04.4~\text{LTS}$. 

\noindent \textbf{Libraries:} The implementation spans multiple languages (including C, Python, and Java) and leverages established cryptographic libraries and tools such as \textit{DAC-from-EQS}\footnote{\url{https://github.com/mir-omid/DAC-from-EQS}}, \textit{Wesolowski-VDF}\footnote{\url{https://github.com/futexor/Wesolowski-VDF}}, \textit{RELIC}\footnote{\url{https://github.com/relic-toolkit/relic}}, and \textit{OpenSSL}\footnote{\url{https://www.openssl.org/}}, to realize the core components of $\slap$ and its underlying primitives (e.g., hashing, modular arithmetic, and exponentiation).  

\noindent \textbf{Configuration and Parameter Selection:} We exclude signal transmission and network communication delays from our analysis, as these are typically constant-time operations and occur at the microsecond scale. Our configuration employs SHA-256 for hashing, structure-preserving set commitments~\cite{fuchsbauer2019structure}, SPSEQ-UC signatures~\cite{mir2023practical}, and the BN-256 elliptic curve (EC) for binding and EC operations (with ${n=2048}$). Anonymous authentication is realized via Schnorr-type zero-knowledge proofs augmented with Damgård’s technique, following the DAC construction~\cite{mir2023practical}. Non-interactive proofs are instantiated via the Fiat–Shamir transform, yielding an effective security level of approximately $100$ bits. The $\sgn$ digital signature is instantiated using ECDSA, a widely adopted and computationally efficient standard signature scheme.


\noindent \textbf{Evaluation Metrics and Rationale:} We empirically analyze the $\slap$ framework by measuring cryptographic, computational, and communication overhead across all operational phases and underlying primitives, including DAC, RLRS, DBP, and VDF. Since no existing solution provides an equivalent combination of functionalities, direct performance comparisons are not applicable. Instead, we present a comprehensive performance breakdown of $\slap$ along key metrics to assess its practicality and deployability. We further include a qualitative and analytical comparison with representative schemes that address partial aspects of these functionalities in the DB-CRN setting. The evaluation proceeds as outlined below.

\subsubsection{Experimental Results} \label{subsec:Performance}
The empirical evaluation of cryptographic overhead, computational costs, and communication overhead for each phase of the $\slap$ framework is summarized in Table~\ref{tab:Computational} and detailed below:

\begin{table}[]
\caption{Computational and communication overhead of $\slap$}
    \centering
    \renewcommand{\arraystretch}{1.2} 
    \setlength{\tabcolsep}{2pt}
    \begin{tabular}{|@{}c@{}|@{}c@{}|@{}c@{}|@{}c@{}|}\hline
    \multirow{2}{*}{\textbf{Phase}} & \multirow{2}{*}{\textbf{Entity}} & \textbf{Computational} & \textbf{Communication} \\
    & & \textbf{Cost (\textit{ms})} & \textbf{Overhead (\textit{B})} \\\hline
    \multirow{2}{*}{\textbf{PoL.AP}}& \textit{Client} & $32.2$ & \multirow{2}{*}{$2456$}\\\cline{2-3} 
    & \textit{AP} & $75.79$ & \\\hline\hline
    \multirow{3}{*}{\textbf{PoL.ND}} & \textit{Client} &  $31.8$ & \multirow{2}{*}{$1944$}\\ \cline{2-3}
    & \textit{ND} & $77.05$ & \\ \hline\hline
    \textbf{Spectrum} & \textit{Client} & $17.23$ & \multirow{2}{*}{$3016$} \\\cline{2-3} 
    \textbf{Query}& \textit{PSD} & $74.34$ & \\ \hline
    \textbf{Service}& \textit{Client} & $\kappa\times S_q + 17.23$ & \multirow{2}{*}{$2712$} \\
    \textbf{Request} & \textit{Server} & $77.68$ & \\\hline    
    \end{tabular}
    \begin{tablenotes}
       \item \textbf{Note:} $S_q$ and $\kappa$ represent the repeated squaring time to solve a puzzle and puzzle difficulty, respectively. 
    \end{tablenotes}\vspace{+2mm}
    \label{tab:Computational}
\end{table}

\noindent\textbf{Cryptographic Overhead:}
To demonstrate credential ownership, the user re-randomizes both the credential $\cred$ and pseudonym $\nym$, and executes a ZKPoK over the secret key $\sk$ and auxiliary randomness $\aux$. This process yields a fresh, unlinkable pseudonym and selectively reveals an attribute subset $D$ via a set commitment construction.  
In practice, the dominant runtime overhead arises from signature conversion, representation normalization, and re-binding to a new set commitment, which require approximately $2$ ms, $5$ ms, and $13$ ms, respectively. 
In the adopted PK-DBP, the interactive AKA phase incurs one ECC multiplication ($0.61~\textit{ms}$), a single hash evaluation ($0.35~\textit{ms}$), and random value generation ($0.05~\textit{ms}$). The subsequent rapid bit-exchange phase operates at nanosecond granularity; the resulting distance-fraud error margin (on the order of ${1-10~\textit{m}}$) is negligible relative to the remaining protocol costs. Proximity verification at the AP, implemented via RSS and RTT measurements in $\proxverify(\cdot)$, typically completes within $1–10~\textit{ms}$ depending on client distance and environmental conditions. 
When instantiated using ECDSA, the $\sgn$ operation requires about $0.3~\textit{ms}$ for signature generation and $3.8~\textit{ms}$ for verification. For the RLRS component, signing, verification, linking, and revocation require approximately $15.8~\textit{ms}$l, $14.98~\textit{ms}$, $0.12~\textit{ms}$, and $0.6~\textit{ms}$, respectively. These costs can be further reduced at the PSD through batch verification and parallelized linking. Finally, solving a VDF on commodity hardware incurs increasing latency with higher difficulty parameters $\kappa$: about $17~\textit{ms}$ for $10^3$ squarings, $61~\textit{ms}$ for $5\times10^3$, $121~\textit{ms}$ for $10^4$, $874~\textit{ms}$ for $8\times10^4$, and $3.17~\textit{s}$ for $3\times10^5$. In contrast, VDF setup and verification remain constant-time, averaging roughly $89~\textit{ms}$ and $4~\textit{ms}$, respectively, across all difficulty levels.

\noindent\textbf{Computational Costs:}
{\em (i)} \textit{$\pol.\ap$ Phase:} The client proves its credential $\cred_c$ and verifies the RLRS signature, while the AP validates the credential, executes $\proxverify$, and issues an RLRS signature as the location proof. {\em (ii)} \textit{$\pol.\nd$ Phase:} Two users engage in interactive protocols that include credential proving and verification, key agreement AKA, symmetric-DBP, and credential delegation with the location proof embedded as a certified attribute. 
{\em (iii)} \textit{$\query$ Phase}: The client submits a valid $\pol$ (from either an AP or an ND) to query the PSD for spectrum availability at a given location and time. The PSD verifies the location proof and linkability (via RLRS or $\cred_c$), then returns spectrum data together with a user-specific VDF. {\em (iv)} \textit{$\service.\request$ Phase:} The client evaluates the VDF, proves possession of a valid credential, and submits the result. The server (or PSD) verifies the solution before authorizing service access or updating the $\db$.

\noindent\textbf{Communication Overhead:}
We consider groups of size ${|\mathbb{G}_1| = |\mathbb{Z}_p| = 256}$ bits, ${|\mathbb{G}_2| = 512}$ bits, and ${|\mathbb{G}_T| = 3072}$ bits, with arithmetic performed modulo a 2048-bit integer. Message payloads satisfy ${|m| < 256}$ B, timestamps occupy $8$ B on a 64-bit Unix system, precise location coordinates require $16$ B, and spectrum availability data $\beta$, derived from FCC records, is approximately $560$ B. All attributes in the system are assumed to have uniform size.  
Credentials encapsulate the values $|\cred|$, $|\sk|$, and $|\nym|$ within set commitments and SPSEQ-UC signatures, yielding a constant-size representation that does not depend on the number of attributes. Specifically, the credential size is ${4|\mathbb{G}_1| + |\mathbb{G}_2| + |\mathbb{Z}_p|}$, totaling $1792$ bits. The size of the commitment vector $\overrightarrow{C}$ scales with the delegation depth (here ${L=2}$), resulting in communication overhead that grows linearly with both the number of attributes and delegation levels. The RLRS signature is independent of the number of APs and contains ${\sigma = 2|G_1| + 3G_1 + 15Z_p =640}$~bytes. Using publicly available FCC database records\footnote{\url{https://enterpriseefiling.fcc.gov/dataentry/public/tv/lmsDatabase.html}}, we estimate each spectrum database entry to be roughly $560$ bytes, augmented with synthetic data for evaluation. The aggregate communication costs across all protocol phases are summarized in Table~\ref{tab:Computational}.


\begin{table*}[ht]
\caption{Qualitative and analytical comparison with existing location privacy schemes}
    \centering
    \resizebox{0.99\textwidth}{!}{
    \Large
    \renewcommand{\arraystretch}{1.2} 
    \begin{tabular}{|{l}|{c}|{c}|{c}|{c}|{c}||{c}|{c}|{c}|{c}|{c}|}\hline
         \multirow{2}{*}{\textbf{Scheme}} & \multicolumn{5}{c||}{\textbf{Features}} & \multicolumn{4}{c|}{\textbf{Delay}} & \textbf{Total} \\ \cline{2-10}
         & \textbf{Setting} & \textbf{Loc.Privacy} & \textbf{Anonym} & \textbf{Loc.Verification} & \textbf{Counter-DoS} & \textbf{SU} & \textbf{PSD} & \textbf{E2E} & \textbf{PoL} & \textbf{Communication}  \\\hline\hline
         \textit{Troja et al}~\cite{troja2014leveraging} & \textit{1-DB} &  \textit{Peer-to-Peer} & \boldsymbol{\cmark} & \boldsymbol{\xmark}& \boldsymbol{\xmark} & $1650$ \textit{ms} & $11760$ \textit{ms} & $13410$ \textit{ms} & \boldsymbol{\xmark} & $12$ \textit{MB} \\\hline
         \textit{Li et al}~\cite{li2015privacy} & \textit{1-DB} & \textit{Pseudo-ID} & \boldsymbol{\xmark} & \textit{WiFi AP+Loc.Server} & \boldsymbol{\xmark} & \boldsymbol{\xmark} & \boldsymbol{\xmark} & \boldsymbol{\xmark} & $210$ \textit{ms} & \boldsymbol{\xmark} \\\hline
         \textit{Xin et al}~\cite{xin2016privacy} &\textit{1-DB} & \textit{PIR} & \boldsymbol{\xmark} & \textit{WiFi AP+QRA} &\boldsymbol{\xmark} & $292.8$ \textit{ms} & $142.7$ \textit{ms} & $407.4$ \textit{ms} & $430.1$ \textit{ms} & 325\textit{KB}\\\hline
         \textit{LP-Chor}~\cite{grissa2019location} & \textit{$\ell$-DB} & \textit{PIR} &\boldsymbol{\xmark} & \boldsymbol{\xmark} & \boldsymbol{\xmark} & $7.7$ \textit{ms} & $480$ \textit{ms} & $620$ \textit{ms} & \boldsymbol{\xmark} & $753$ \textit{KB}\\\hline
         \textit{LP-Goldberg}~\cite{grissa2019location} & \textit{$\ell$-DB} & \textit{PIR} &\boldsymbol{\xmark} & \boldsymbol{\xmark} & \boldsymbol{\xmark} & $320$ \textit{ms} & $1210$ \textit{ms} & $1780$ \textit{ms} & \boldsymbol{\xmark} & $6$ \textit{MB}\\\hline
         \textit{RAID-LP-Chor}~\cite{grissa2019location} & \textit{$\ell$-DB} & \textit{PIR} &\boldsymbol{\xmark} & \boldsymbol{\xmark} & \boldsymbol{\xmark} & 
         $0.4$ \textit{ms} & $22$ \textit{ms} & $210$ \textit{ms} & \boldsymbol{\xmark}& $125$ \textit{KB}\\\hline
         Zeng et al~\cite{zeng2019efficient} & \textit{1-DB} & \textit{BS+ECC} & \textit{PseudoID} &\boldsymbol{\xmark} &\boldsymbol{\xmark}  & $87$ \textit{ms} & $27$ \textit{ms} & $135$ \textit{ms} & \boldsymbol{\xmark} & $1.24$ \textit{KB}\\ \hline
         \textit{TrustSAS}~\cite{grissa2021anonymous} &\textit{$\ell$-DB} &\textit{PIR} & \textit{EPID} & \boldsymbol{\xmark} & \boldsymbol{\xmark} & 
         $329.4$ \textit{ms} & $324.6$ \textit{ms} & $4954$ \textit{ms} & \boldsymbol{\xmark} & $1.25$ \textit{MB} \\\hline
         \textit{PACDoSQ}~\cite{darzi2024privacy} & \textit{$\ell$-DB} &\textit{PIR}& \textit{Tor} & \boldsymbol{\xmark} & \textit{HBP} & $28.1$ \textit{ms} & $199$ \textit{ms} & $1373.6$ \textit{ms} & \boldsymbol{\xmark} & $605.92$ \textit{KB}\\\hline
         \multirow{2}{*}{$\slap$} & \multirow{2}{*}{\textit{1-DB}} & \multirow{2}{*}{\textit{DAC}} & \multirow{2}{*}{\textit{DAC}} & \textit{WiFi-AP+RLRS} & \textit{VDF} &\multirow{2}{*}{$17.23$ \textit{ms}}&\multirow{2}{*}{$74.34$ \textit{ms}}&\multirow{2}{*}{$78.61$ \textit{ms}}&$107.99$ \textit{ms}& \multirow{2}{*}{$3.72$\textit{KB}}\\ 
         &&&&\textit{DBP+DAC}& \textit{Rate-Limiting} &&&&$108.85$ \textit{ms} &\\\hline
    \end{tabular}}
    \begin{tablenotes}
       \item \textbf{Libraries:} Virtual Machines running Ubuntu simulated PIR costs, using the \textit{percy++} library\footnote{\url{https://percy.sourceforge.net/}} for multi-server PIR, the \textit{Open Quantum-Safe} library\footnote{\url{https://openquantumsafe.org/}} for PQC primitives, and \textit{OpenSSL} for cryptographic operations and arithmetic. \textbf{Variables:} We consider six databases for multi-DB schemes with $|DB| = 560~\textit{MB}$ and $400$ rows/columns as described in \cite{xin2016privacy}. Key terms include \textit{BS} (base station), \textit{HBP} (hash-based puzzles), \textit{QRA} (quadratic residue assumption), and \textit{EPID} (enhanced privacy ID based on direct anonymous attestation). 
    \end{tablenotes}\vspace{-5mm}
    \label{tab:QuantitativeComparison}
\end{table*}

\noindent\textbf{Comparison with SOTA:}
We present a qualitative and analytical comparison of the functionality provided by $\slap$ against representative location-privacy schemes, summarized in Table~\ref{tab:QuantitativeComparison}. To ensure a balanced assessment, we evaluate spectrum query overhead from both the client and PSD perspectives, overall communication costs, and the end-to-end latency required to retrieve a single geolocation database record as an indicator of system scalability. As reflected in Table~\ref{tab:QuantitativeComparison}, $\slap$ uniquely supports the full set of security and privacy requirements for anonymous spectrum access, while enabling flexible deployment options for location verification and achieving low latency with minimal communication overhead.



\subsection{Network-Level Evaluation}
\label{subsec:simulation}
This section presents an instantiation of the SLAPX protocol in a realistic wireless network setting and evaluates its performance through simulations. We use \texttt{ns3}~\cite{NS3} to model the deployment of SLAPX and analyze its communication overhead, while the 5G NR (New Radio) LENA module~\cite{5G_LENA} is employed to simulate the 5G infrastructure.

\subsubsection{Simulation Setup and Configuration} \label{subsec:simulationsetup} 
\textbf{Overview of the \texttt{ns3} Setup:}
We assume that the regulatory authority (e.g., the FCC) has already provisioned all required certificates, keys, and credentials. As a result, the FCC is not explicitly modeled in the \texttt{ns3} environment. The remaining entities in the SLAPX framework are mapped as follows: the client is modeled as a 5G NR User Equipment (UE), the AP as a 5G gNodeB, and the PSD as a server node. Furthermore, servers that provide cloud services are modeled as one single server node. In this setup, each UE operates within the radio access network (RAN) and connects to the nearest gNodeB over a wireless channel using 5G NR links. Server nodes are deployed in the core network and are connected to the gNodeBs via wired backhaul links.


\textbf{5G Parameter Configuration:}
The 5G NR LENA module offers a wide range of configuration options for realistic 5G simulations. For simplicity, we retain default settings for most parameters and manually configure only a selected subset. The configured parameters and their values are summarized in Table~\ref{tab:ns3-configuration}. Specifically, the channel model follows the standard NR specification~\cite{3GPPChannelModel} used by default in the LENA module. The transmission power of both UE and gNodeB nodes is set to $30$~dBm, reflecting typical deployment conditions. The Maximum Transmission Unit (MTU) is set to the standard Ethernet frame size of $1500$~bytes, and the backhaul network is modeled with a data rate of $1$~Gbps and an end-to-end latency of $2$~ms.


\begin{table}[]
\caption{Simulation parameters in the \texttt{ns3} experiments}
    \centering
    \begin{tabular}{|p{3cm}|p{4cm}|} \hline
         \textbf{Parameter} & \textbf{Value} \\ \hline
         Channel Model & 3GPP TR 38.901 (default) \\ \hline
         UE Tx Power & 30 dBm \\ \hline
         gNodeB Tx Power & 30 dBm \\ \hline
         Number of cells per gNodeB & 3 \\ \hline
         Beam width & 50 degrees \\ \hline
         Backhaul data rate & 1 Gbps \\ \hline
         Backhaul delay & 2 ms \\ \hline
         MTU (DoS) & 1500 bytes \\ \hline
         MTU (Fragmentation) & 576-9000 bytes \\ \hline
    \end{tabular}
    \label{tab:ns3-configuration}
\end{table}

\textbf{Server Modeling:}
Because \texttt{ns3} is a discrete-event simulator, directly modeling real-time server overload caused by DoS attacks is nontrivial. While network-level congestion can be simulated, capturing the impact of DoS attacks on server-side processing requires an abstracted model. Therefore, each server is represented using a queue-and-worker abstraction, a common approach in analytical DoS modeling~\cite{Wang2007}. 
Specifically, the server consists of two components: \emph{(i) Workers}, which model independent computational units (e.g., CPU cores) capable of processing requests in parallel; and \emph{(ii) Queue}, which buffers incoming requests when all workers are busy. However, when the queue is full, incoming requests are dropped. This model enables us to capture server saturation and request loss under DoS conditions, serving as the basis for evaluating DoS scenarios in our simulations.

\textbf{Terminology:}
To present the evaluation results clearly and concisely, we use a set of symbols to represent the performance metrics and experiment parameters used in our DoS scenarios, as shown in Table~\ref{tab:DoS-symbols}.

\begin{table}[]
\caption{Symbols used in the scope of DoS experiments}
    \centering
    \begin{tabular}{|c|p{6cm}|} \hline
         \textbf{Symbol} & \textbf{Description} \\ \hline
         $N_U$ & Total number of UEs spawned \\ \hline
         $N_Q$ & Total number of messages that are queued \\ \hline
         $N_D$ & Total number of dropped benign messages due to busy workers and full queue capacity \\ \hline
         $t_Q$ & The average waiting time of a message in the queue (milliseconds) \\ \hline
         $r_{mal}$ & The ratio of malicious UEs to the total number of UEs \\ \hline
    \end{tabular}
    \label{tab:DoS-symbols}
\end{table}


\subsubsection{Simulation Results}  \label{subsec:simulation_results}

\textbf{Fragmentation Analysis of SLAPX:} Based on a realistic 5G deployment, we evaluate SLAPX through numerical analysis focusing on fragmentation behavior and communication overhead. Due to the relatively large size of certain protocol messages, packet fragmentation becomes observable as illustrated in Figure~\ref{fig:Fragmentation-Analysis}, where the upper illustration shows the relationship between MTU values and fragmentation across different SLAPX message types. Although the 5G standard does not mandate a fixed MTU, many deployments adopt the standard Ethernet MTU of $1500$ bytes~\cite{RFC_MTU}. Under this setting, only the \emph{Service Request} and \emph{Spectrum Request} messages experience fragmentation, increasing from one to two packets per message. The lower plot in Figure~\ref{fig:Fragmentation-Analysis} further analyzes communication overhead under varying MTU sizes, illustrating the proportion of protocol and network headers relative to payload size. As expected, messages with smaller payloads incur higher relative overhead due to header dominance.

\begin{figure}[ht]
    \centering
    \includegraphics[width=\linewidth]{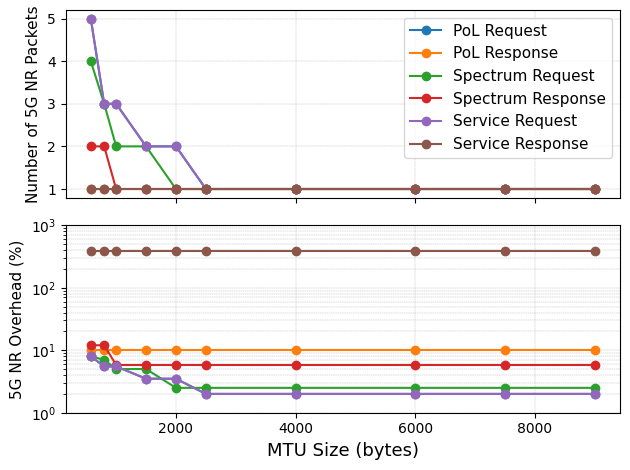}
    \caption{Fragmentation analysis for SLAPX protocol messages with varying MTU sizes} 
    \label{fig:Fragmentation-Analysis} 
\end{figure}

\textbf{DoS Experiments:} Each of the three attack cases, along with a baseline scenario without SLAPX, is evaluated independently. For each scenario, we simulate 10 seconds of network time, with the attack initiated at $t=2$ and terminated at $t=8$.


\textit{Baseline DoS:} To establish a baseline and expose system vulnerabilities under adversarial load, we first evaluate the network in the absence of any protection mechanism. In this baseline scenario, SLAPX is disabled and malicious UEs directly flood the server with requests to induce service disruption. We conduct a large set of experiments with varying $N_U$ and $r_{mal}$. System performance is evaluated using three key metrics: (i) the number of queued messages ($N_Q$), (ii) the average delay experienced by messages in the queue ($t_Q$), and (iii) the number of requests from benign UEs that are dropped due to full queue capacity and busy workers ($N_D$).


\begin{figure*}[ht]
    \begin{subfigure}{0.33\linewidth}
        \includegraphics[width=\linewidth]{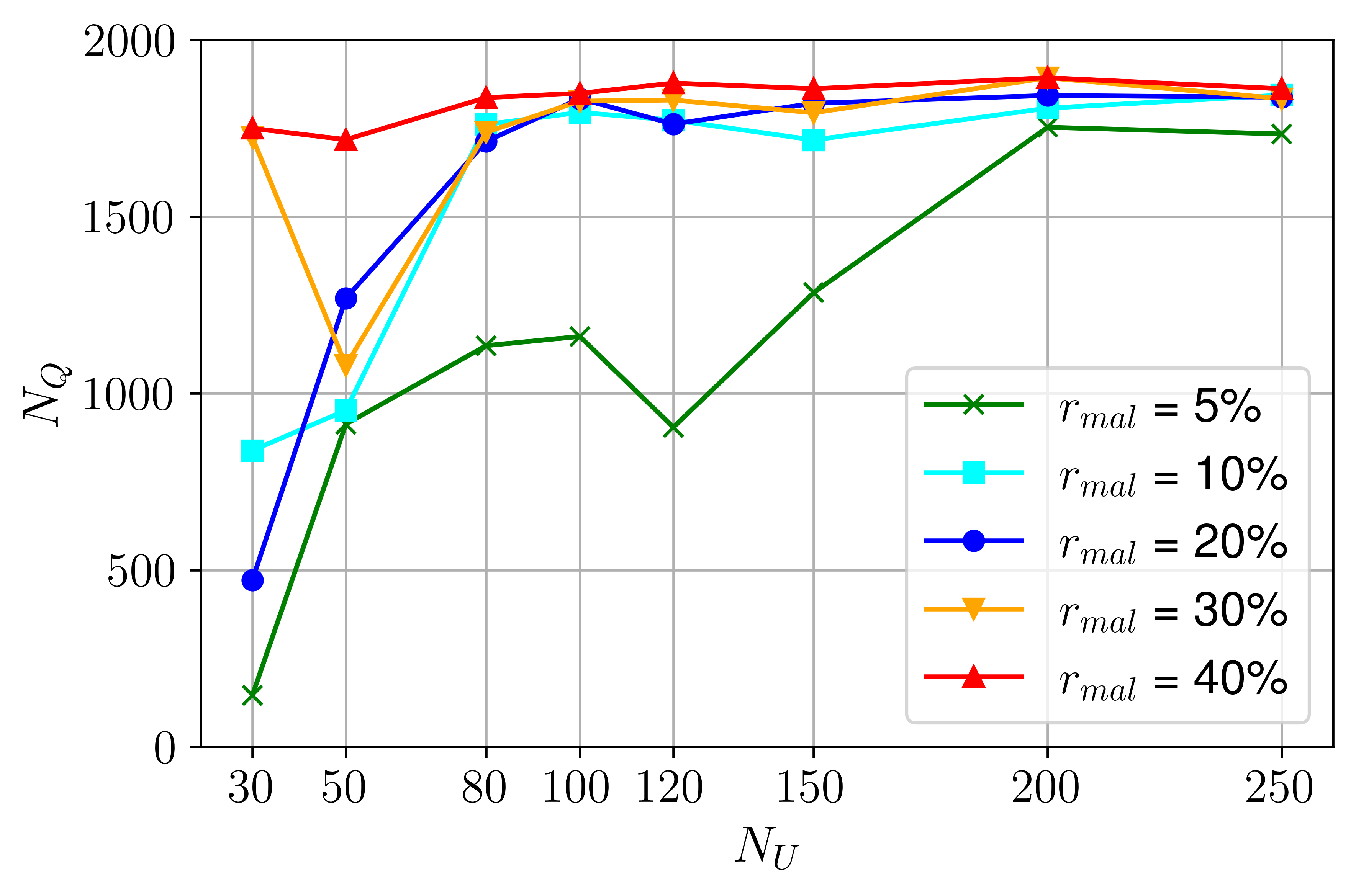}
        \caption{Total number of queued messages ($N_Q$) against total number of UEs ($N_U$)}
        \label{fig:Queued-Messages-0}
    \end{subfigure}
    \begin{subfigure}{0.33\linewidth}
        \includegraphics[width=\linewidth]{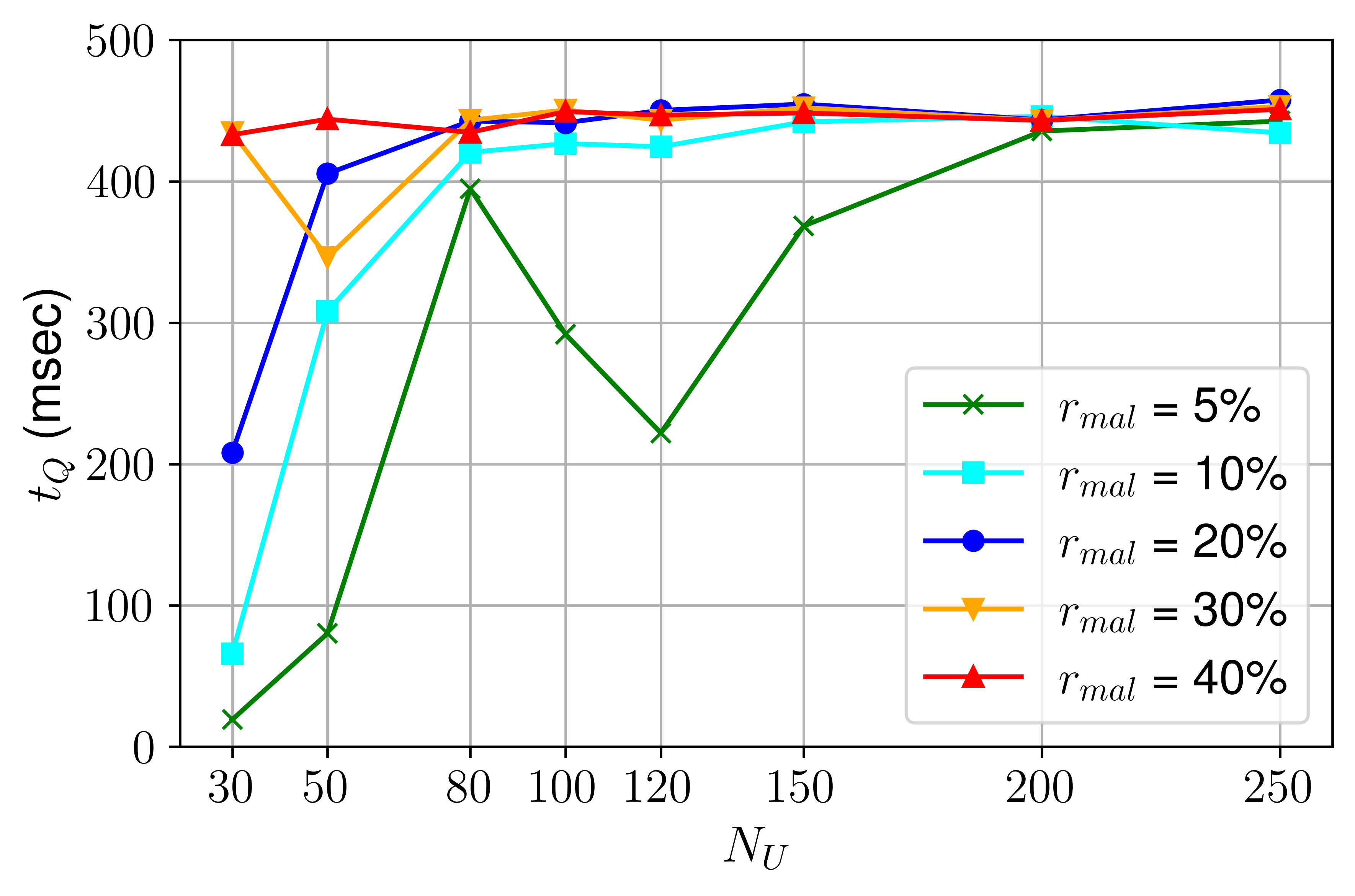}
        \caption{Average waiting time of messages in the queue ($t_Q$) against total number of UEs ($N_U$)}
        \label{fig:Avg-Waiting-Time-0}
    \end{subfigure}
    \begin{subfigure}{0.33\linewidth}
        \includegraphics[width=\linewidth]{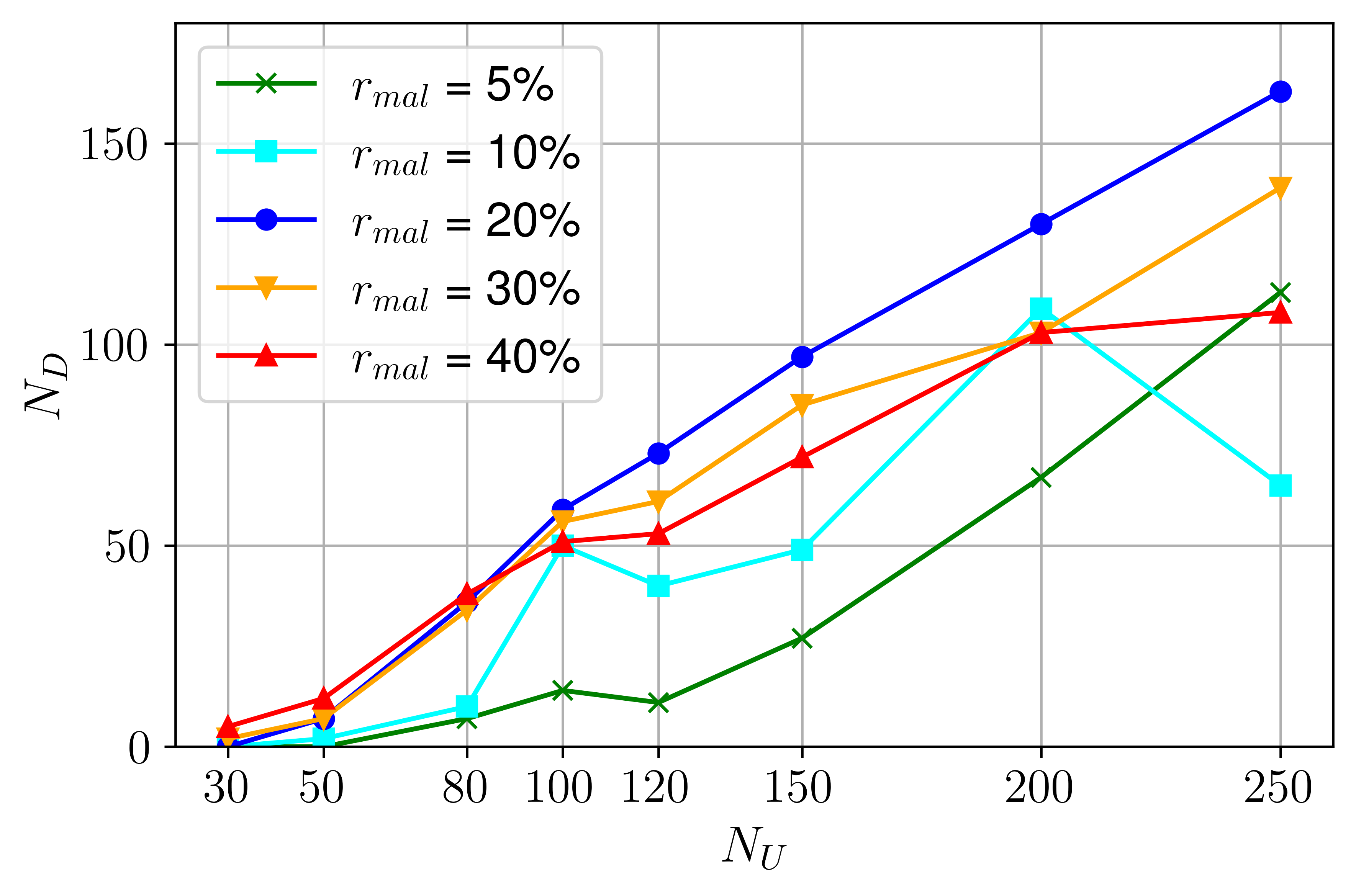}
        \caption{Total number of dropped messages ($N_D$) against total number of UEs ($N_U$)}
        \label{fig:Dropped-Messages-0}
    \end{subfigure}
    \vspace{-2em}
    \caption{Performance metrics for the Baseline DoS scenario across different numbers of UEs and malicious UE ratios}
    \label{fig:Experiments-DoS-Scenario-0}\vspace{-4mm}
\end{figure*}


Among the evaluated performance metrics, Figure~\ref{fig:Queued-Messages-0} illustrates the number of requests that cannot be processed immediately and are therefore placed into the server queue, denoted by $N_Q$. Apparently, $N_Q$ increases sharply with both $N_U$ and $r_{mal}$, reaching nearly 2000 messages in environments with high ratio of malicious UEs. Note that this value reflects the cumulative number of messages enqueued over time, rather than the maximum queue length at any instant. This distinction arises because queued messages are continuously dequeued for processing as workers become available while new requests arrive concurrently.  
Similar trends are observed in Figure~\ref{fig:Avg-Waiting-Time-0}, which reports the average waiting time of messages in the server queue, representing the time a request spends queued before being processed by an available worker and detoted by $t_Q$. Without anti-DoS countermeasures provided by SLAPX, $t_Q$ exceeds 100~ms in most scenarios and can rise to nearly 500~ms as $N_U$ and $r_{mal}$ increase. Such delays can severely impact time-sensitive applications and degrade overall service quality.



The most severe performance degradation is observed in Figure~\ref{fig:Dropped-Messages-0}, which shows the number of benign requests dropped by the server due to the DoS attack, denoted by $N_D$. As expected, $N_D$ increases with $N_U$, since a larger population generates more legitimate traffic. Notably, the highest drop rate occurs when $r_{mal}$ is 20\%, exceeding even the $r_{mal} = 40\%$ case. This behavior can be explained by the reduction in the number of benign UEs as the malicious fraction grows, which ultimately limits the volume of legitimate traffic subject to dropping. This finding displays the vulnerability of a regular SAS without any counter-DoS mechanism.


\textit{Full Protocol DoS:} 
\begin{figure*}[ht]
    \centering
    \begin{subfigure}{0.43\linewidth}
        \includegraphics[width=\linewidth]{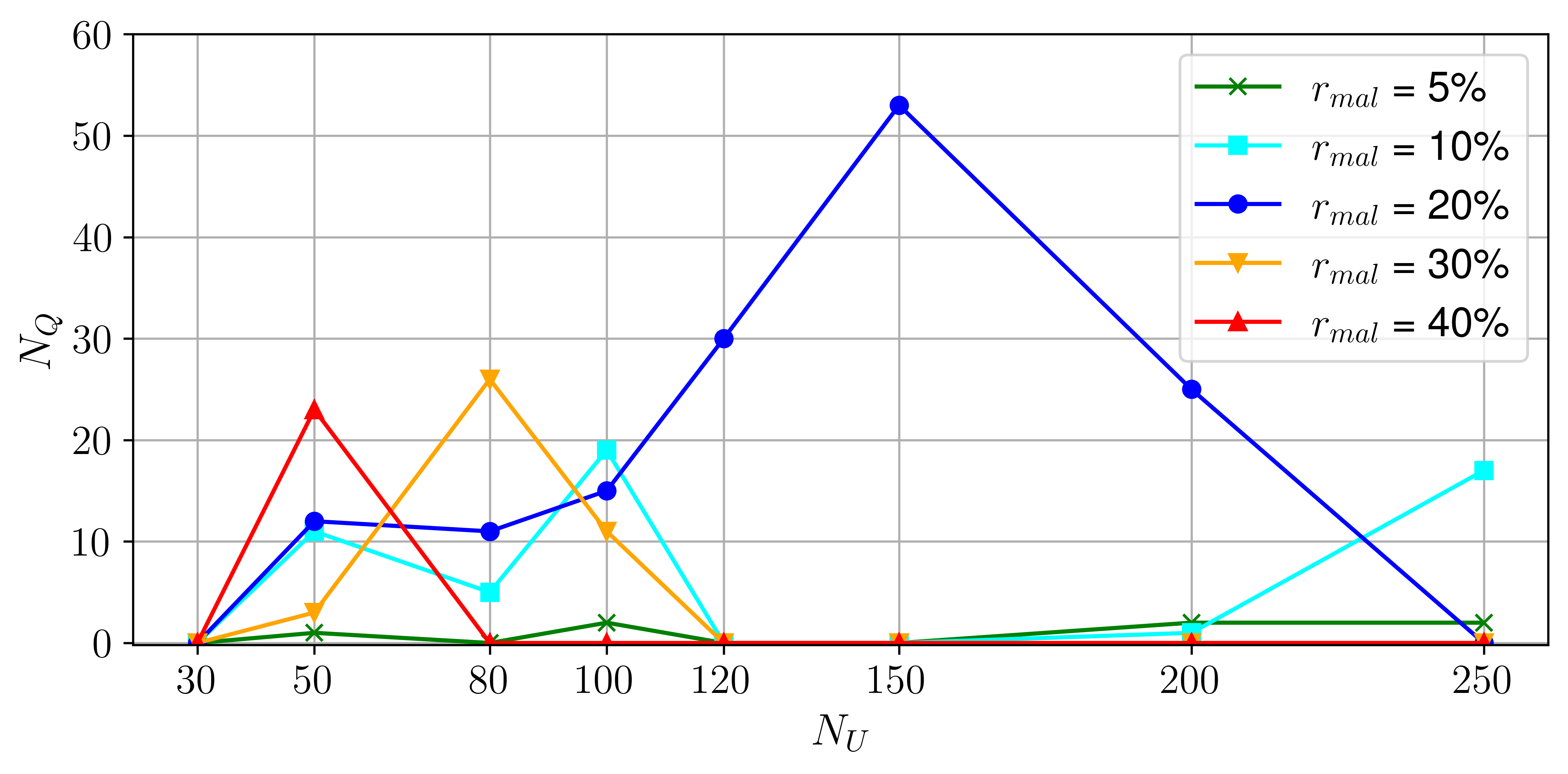}
        \caption{Total number of queued messages ($N_Q$) against total number of UEs ($N_U$)}
        \label{fig:Queued-Messages-1}
    \end{subfigure}
    \begin{subfigure}{0.43\linewidth}
        \includegraphics[width=\linewidth]{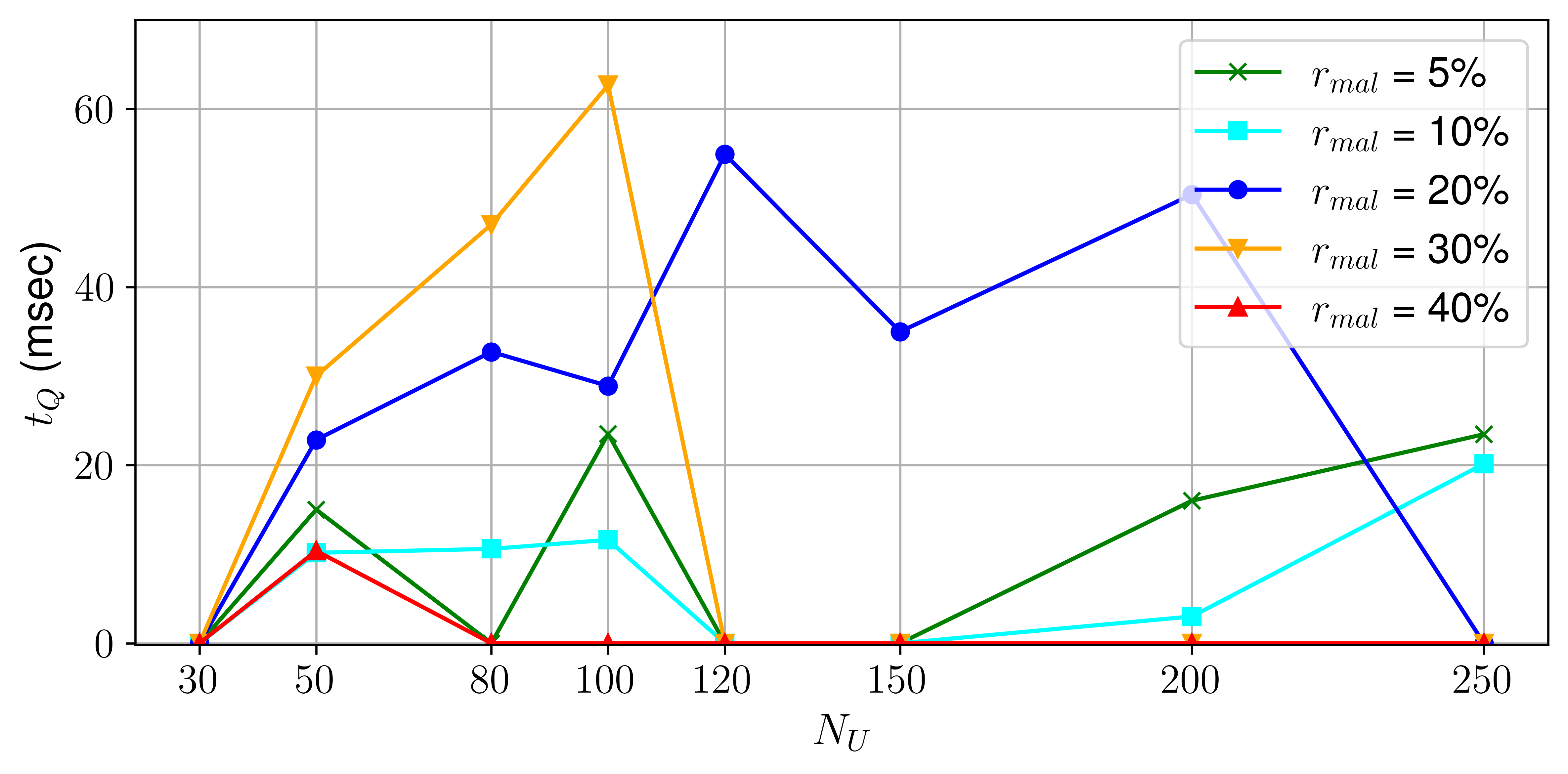}
        \caption{Average waiting time of messages in the queue ($t_Q$) against total number of UEs ($N_U$)}
        \label{fig:Avg-Waiting-Time-1}
    \end{subfigure}
    \vspace{-2mm}
    \caption{Performance metrics across different numbers of UEs and malicious UE ratios, in the \textit{Full Protocol DoS} scenario}
    \label{fig:Experiments-DoS-Scenario-1}\vspace{-5mm}
\end{figure*}
We conduct experiments with varying $N_D$ and $r_{mal}$, and summarize the results in Figure~\ref{fig:Experiments-DoS-Scenario-1}. System performance is evaluated using three metrics: $N_Q$, $t_Q$, and $N_D$. As shown in Figure~\ref{fig:Queued-Messages-1}, SLAPX decreases $N_Q$ to fewer than 50 messages during the simulations, a significant reduction compared to the baseline scenario, which yielded 1000 messages. Consistently, $t_Q$ remains below 65~ms, as illustrated in Figure~\ref{fig:Avg-Waiting-Time-1}. Moreover, no benign message drops are observed ($N_D=0$), demonstrating that the DoS attack is effectively mitigated by SLAPX under this scenario.

\textit{Computation-Bypassing DoS:}
\begin{figure*}[ht]
    \centering
    \begin{subfigure}{0.43\linewidth}
        \includegraphics[width=\linewidth]{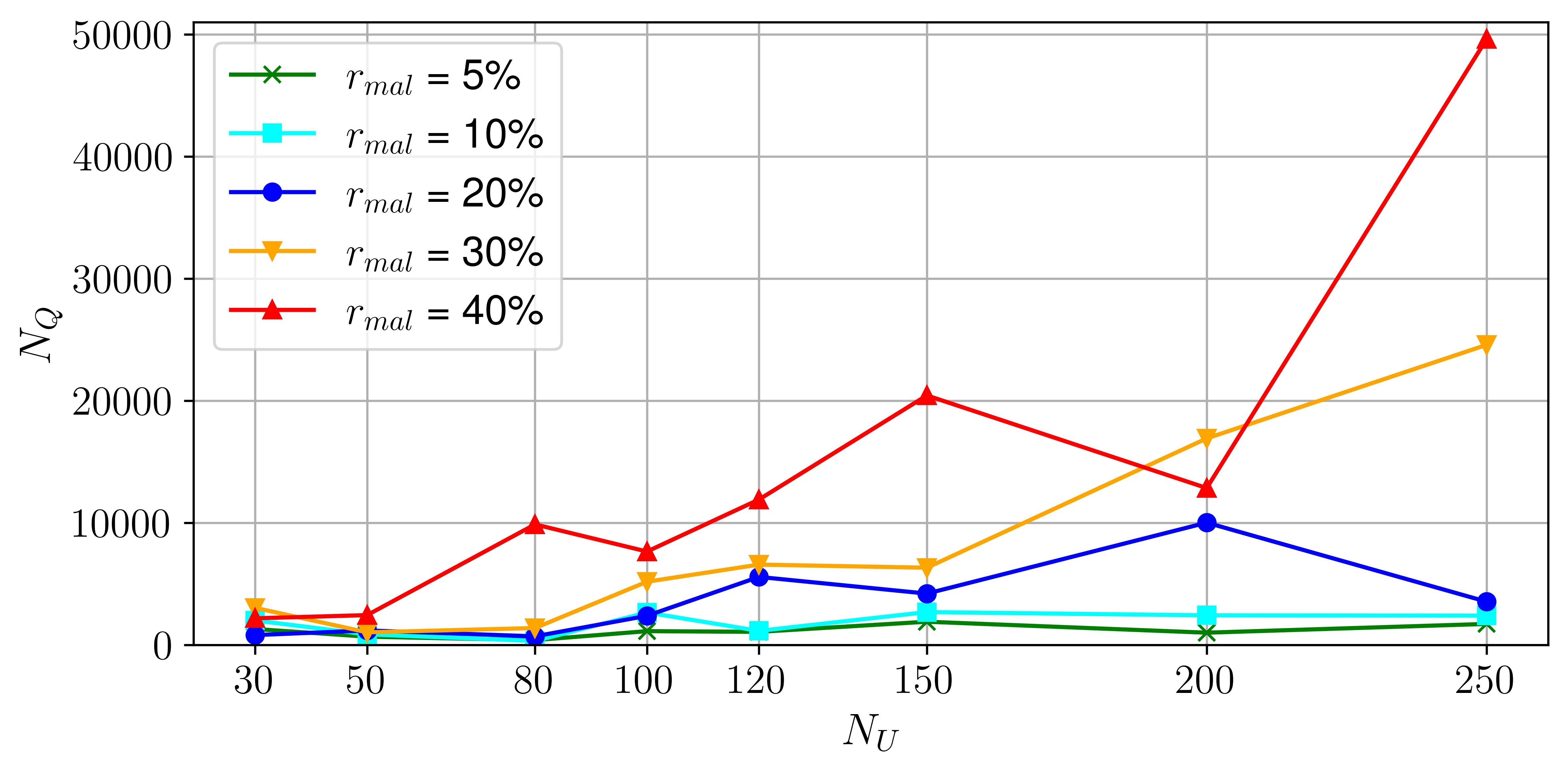}
        \caption{Total number of queued messages ($N_Q$) against total number of UEs ($N_U$)}
        \label{fig:Queued-Messages-2}
    \end{subfigure}
    \begin{subfigure}{0.43\linewidth}
        \includegraphics[width=\linewidth]{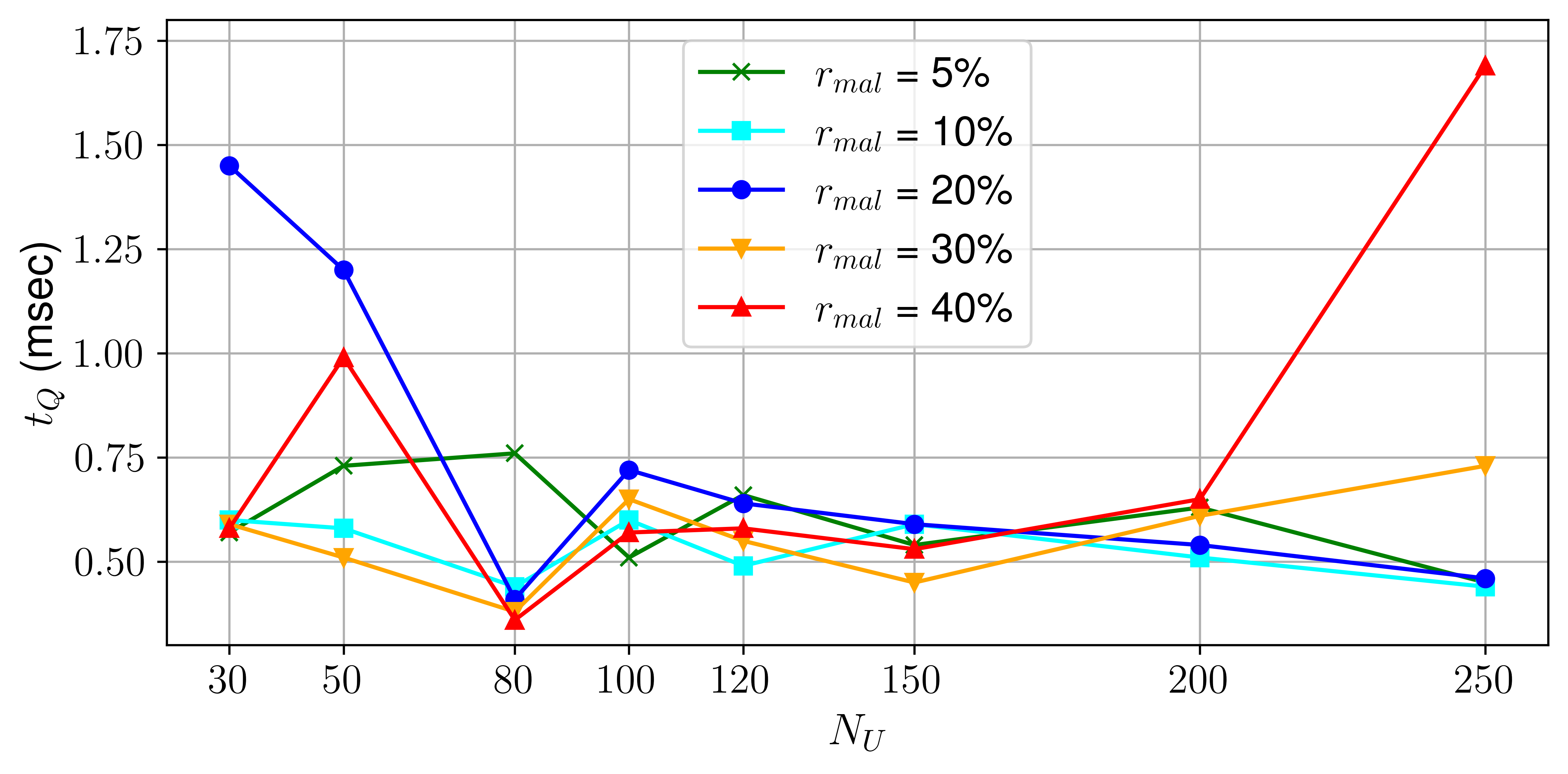}
        \caption{Average waiting time of messages in the queue ($t_Q$) against total number of UEs ($N_U$)}
        \label{fig:Avg-Waiting-Time-2}
    \end{subfigure}
    \vspace{-1em}
    \caption{Performance metrics across different numbers of UEs, in the \textit{Computation-Bypassing DoS} scenario}
    \label{fig:Experiments-DoS-Scenario-2}\vspace{-4mm}
\end{figure*}
\begin{figure*}[ht]
    \centering
    \begin{subfigure}{0.43\linewidth}
        \includegraphics[width=\linewidth]{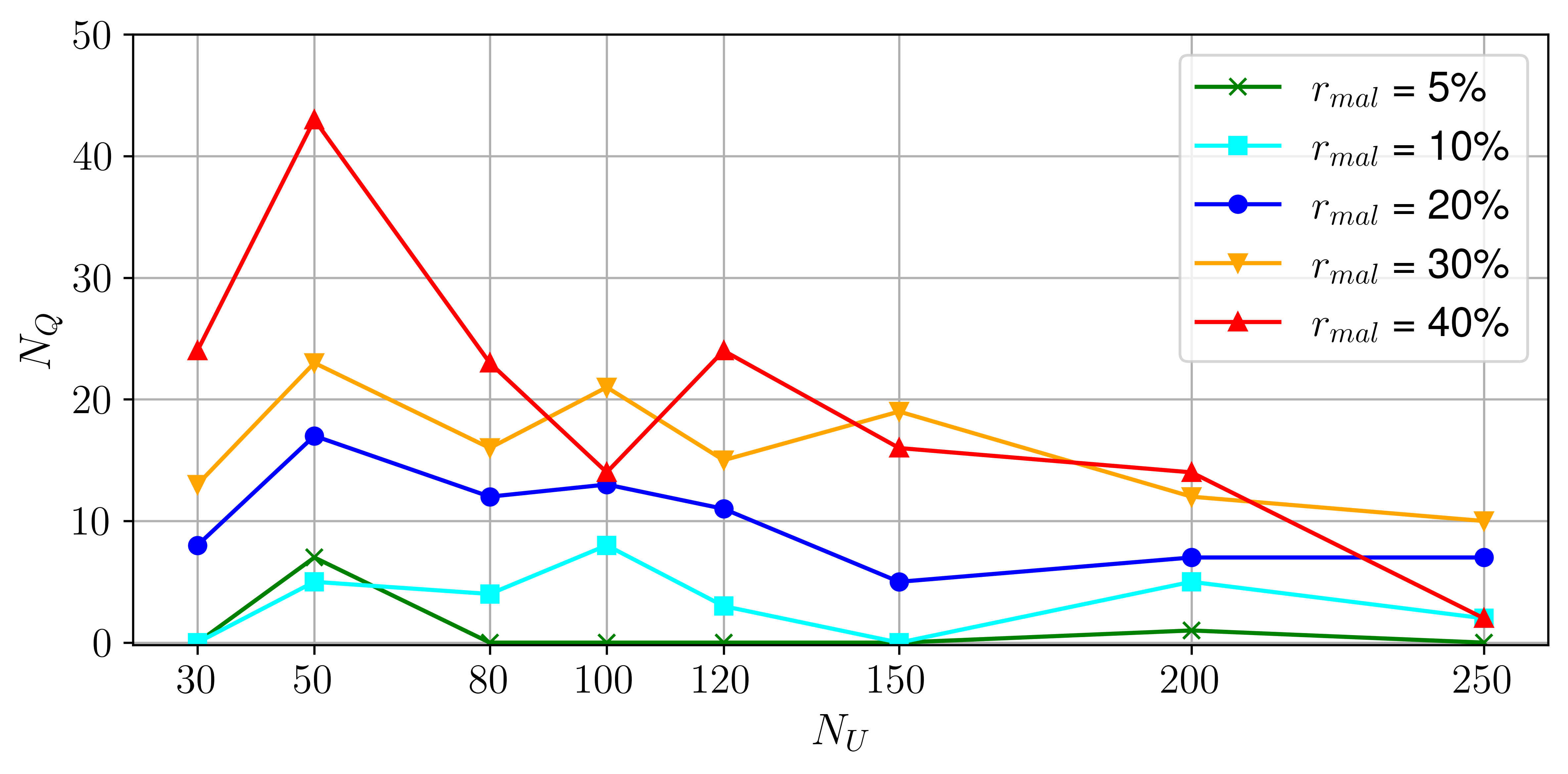}
        \caption{Total number of queued messages ($N_Q$) against total number of UEs ($N_U$)}
        \label{fig:Queued-Messages-3}
    \end{subfigure}
    \begin{subfigure}{0.43\linewidth}
        \includegraphics[width=\linewidth]{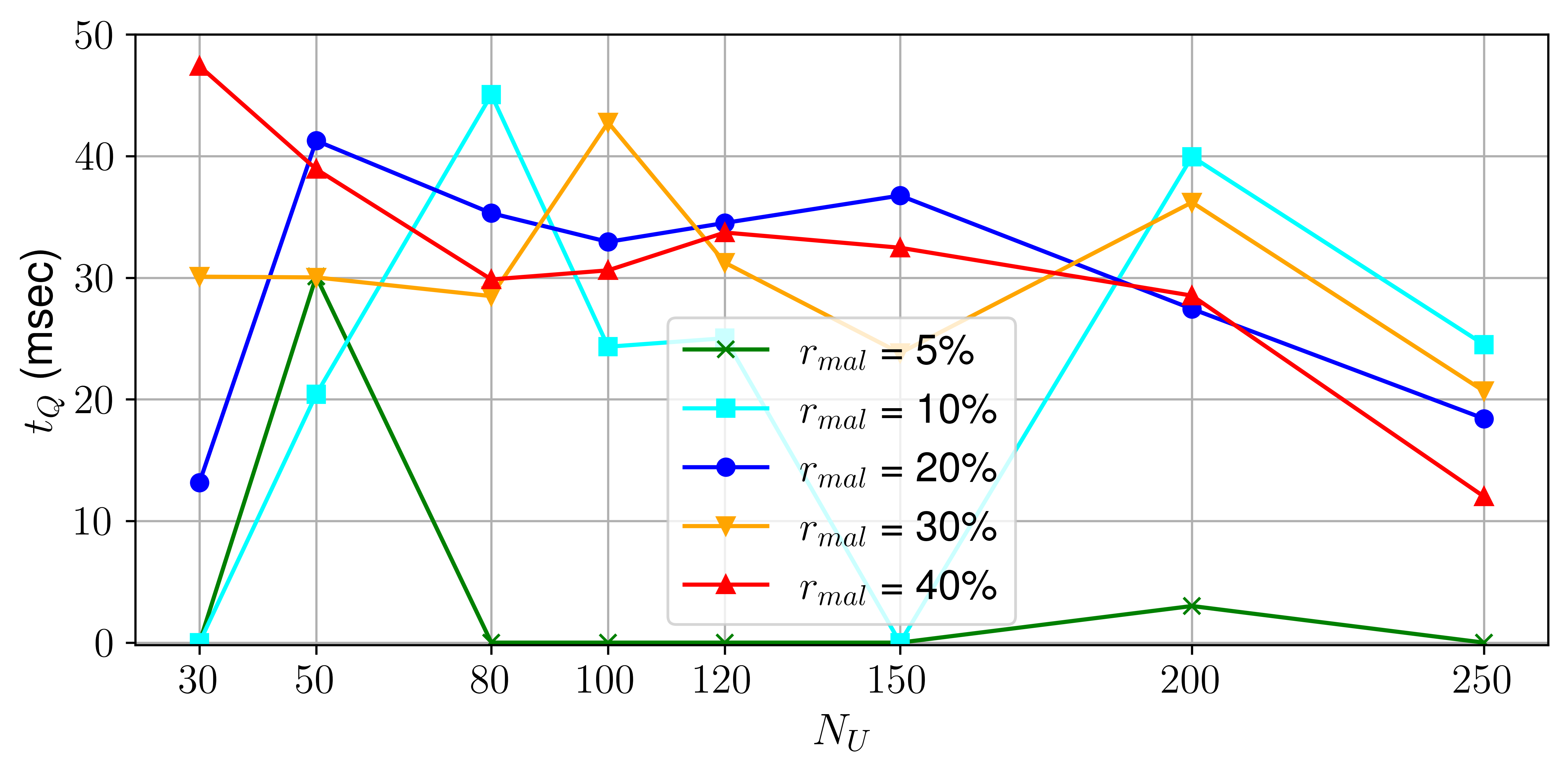}
        \caption{Average waiting time of messages in the queue ($t_Q$) against total number of UEs ($N_U$)}
        \label{fig:Avg-Waiting-Time-3}
    \end{subfigure}
    \vspace{-1em}
    \caption{Performance metrics across different numbers of UEs, in the \textit{Proof Precomputation DoS} scenario}
    \label{fig:Experiments-DoS-Scenario-3}\vspace{-4mm}
\end{figure*}
Again, we conduct experiments with varying $N_U$ and $r_{mal}$, with results shown in Figure~\ref{fig:Experiments-DoS-Scenario-2}. The system is evaluated using the same three performance metrics as in the previous scenario. As illustrated in Figure~\ref{fig:Avg-Waiting-Time-2}, $t_Q$ remains almost negligible and is even lower than in the \textit{Full-Protocol DoS} scenario, showcasing the effectiveness of SLAPX. While Figure~\ref{fig:Queued-Messages-2} shows a relatively larger $N_Q$, this increase is expected and inevitable because adversaries can send many more requests without having to perform the computational challenge. Nevertheless, in this scenario, the server promptly identifies and discards malicious requests that lack valid puzzle solutions, preventing them from consuming processing resources. The effectiveness of this early filtering is further confirmed by the consistently low $t_Q$ observed across all configurations, indicating that SLAPX does not incur significant performance degradation in this scenario either. Still, one can place a network-level firewall in front of the application and add customized rules to drop high-frequency messages, from the same source, on a network level. This additional mechanism would prevent the server queue from filling quickly and thus, helps to obtain a much lower $N_q$.



\begin{figure}
    \centering
    \includegraphics[width=0.9\linewidth]{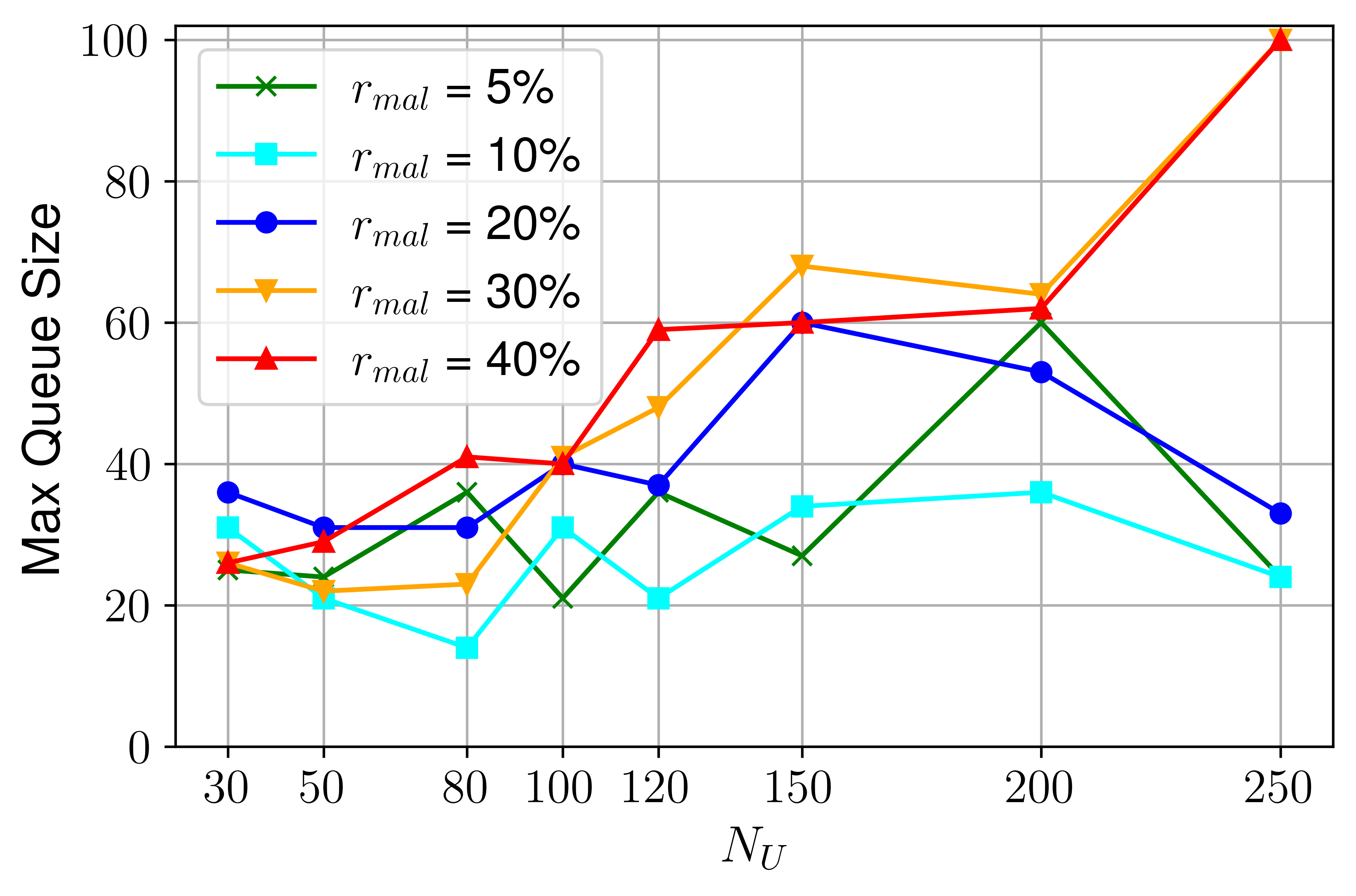}
    \vspace{-2mm}
    \caption{Maximum queue size against total number of UEs ($N_U$), in the Computation-Bypassing DoS scenario}
    \label{fig:Max-Queue-Length-2}
\end{figure}

To further explain the large number of queued messages observed in this scenario, we examine the maximum queue length reached during the simulations, as shown in Figure~\ref{fig:Max-Queue-Length-2}. This analysis was omitted for the \textit{Full-Protocol DoS} scenario due to the consistently low $N_Q$. In contrast, the present scenario exhibits a higher volume of queued requests, warranting closer inspection of queue saturation. As illustrated, the queue capacity of 100 is reached only in two configurations—both with 250 UEs and $r_{mal}$ of $30\%$ and $40\%$. Although packet drops occur in these cases, all dropped packets originate from malicious UEs, and no benign requests are denied service. This result confirms that SLAPX effectively contains the attack and preserves service availability for legitimate users.

\begin{figure}
    \centering
    \includegraphics[width=0.90\linewidth]{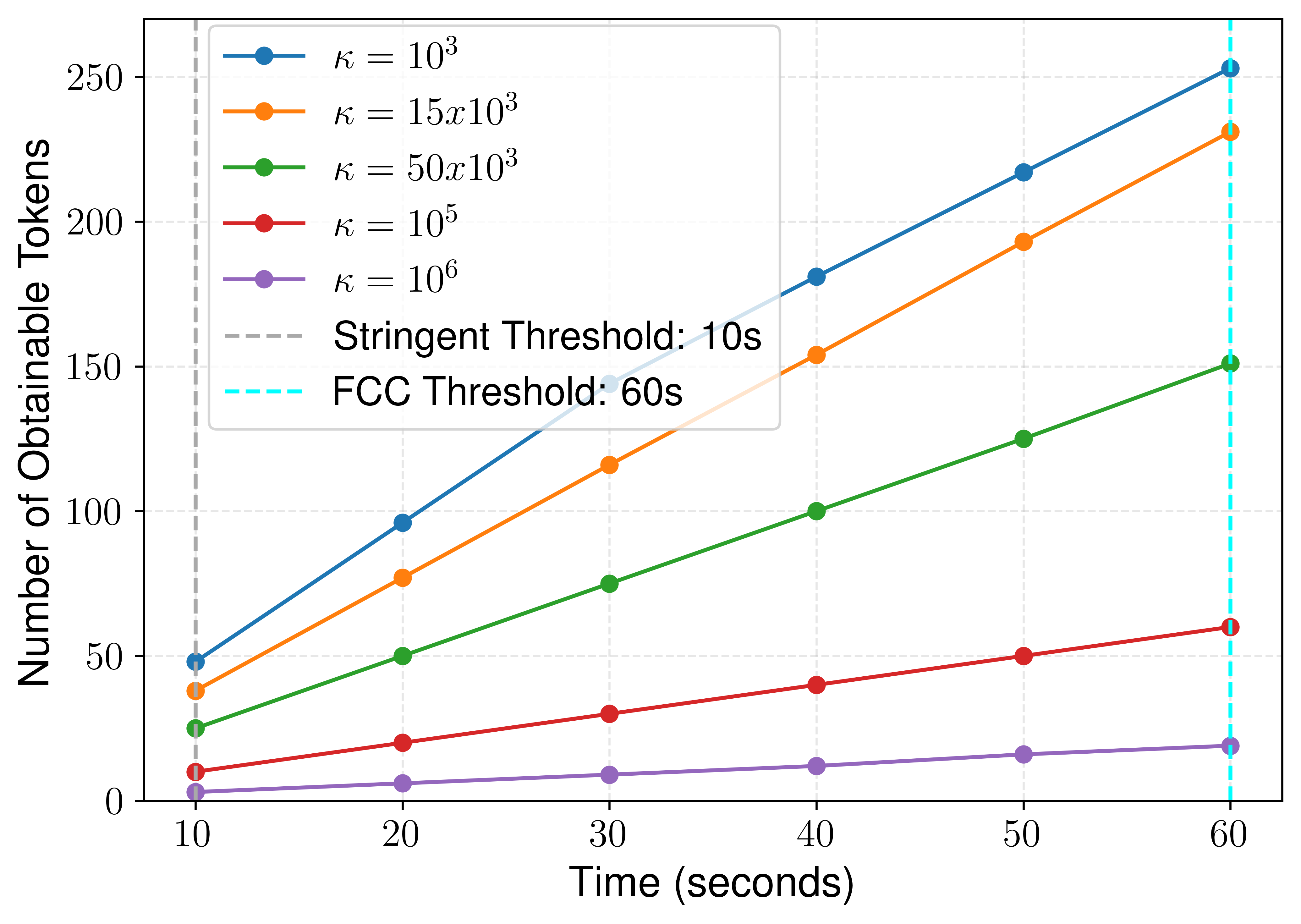}
    \caption{Limits of the proof pre-computation process}
    \label{fig:Puzzle-Pre-computation-Limits}\vspace{-1mm}
\end{figure}

\textit{Proof Precomputation DoS:}
Again, we conduct experiments with varying $N_U$ and $r_{mal}$, with results displayed in Figure~\ref{fig:Experiments-DoS-Scenario-3}. The system is evaluated using the same three performance metrics, with results shown in Figures~\ref{fig:Queued-Messages-3} and~\ref{fig:Avg-Waiting-Time-3}. As observed, system performance closely resembles that of the previous protected scenarios and remains significantly better than the baseline configuration without SLAPX. Notably, zero benign messages are dropped ($N_D = 0$), demonstrating that SLAPX successfully prevents service disruption across all three DoS attack scenarios.   

In addition to the standard metrics evaluated across all DoS scenarios, the puzzle pre-computation attack introduces an additional dimension of analysis. Although malicious UEs may attempt to pre-compute multiple puzzles in advance, the effectiveness of such an attack is inherently bounded. Each proof of location ($\pol$) includes a timestamp and is valid only within a limited time window. As a result, puzzle solutions expire after a fixed duration, and excessive pre-computation causes earlier solutions to become invalid before they can be used. Consequently, the maximum number of usable tokens an adversary can accumulate is constrained by the PoL expiration threshold and the puzzle difficulty parameters. This trade-off is illustrated in Figure~\ref{fig:Puzzle-Pre-computation-Limits}. As illustrated, no UE can pre-compute more than 250 puzzles, even under the easiest configuration and with the FCC-defined validity threshold of 60 seconds. If the system operator adopts a stricter expiration window or increases puzzle difficulty, the number of obtainable tokens drops sharply. This analysis confirms that the defense provided by SLAPX can withstand adversaries with strong computational capabilities.

\textbf{Location Spoofing Experiments:}
Here, we present the results of our experiments on the two location spoofing scenarios, each of which is evaluated separately.



\textit{Distance Hijacking:}  For one of the location spoofing scenarios, we evaluate the effectiveness of the Proximity Verification function (\textit{ProxVerify}), which enables a gNodeB to estimate the distance to UEs using a combination of received signal strength (RSS) and round-trip time (RTT) measurements. In the considered distance hijacking attack, a malicious UE communicates with a compromised benign UE over an ad-hoc WiFi network using CSMA, while the benign UE maintains a legitimate 5G NR connection with the gNodeB. The benign UE is configured to relay all messages between the malicious UE and the gNodeB without delay. The proximity threshold is set to 50 meters, in accordance with FCC guidelines~\cite{FCC_Thresholds}, such that devices within this range are classified as proximate, while those beyond it are considered distant.

\begin{figure}
    \centering
    \includegraphics[width=0.9\linewidth,trim={1.7cm 2cm 3cm 2cm},clip]{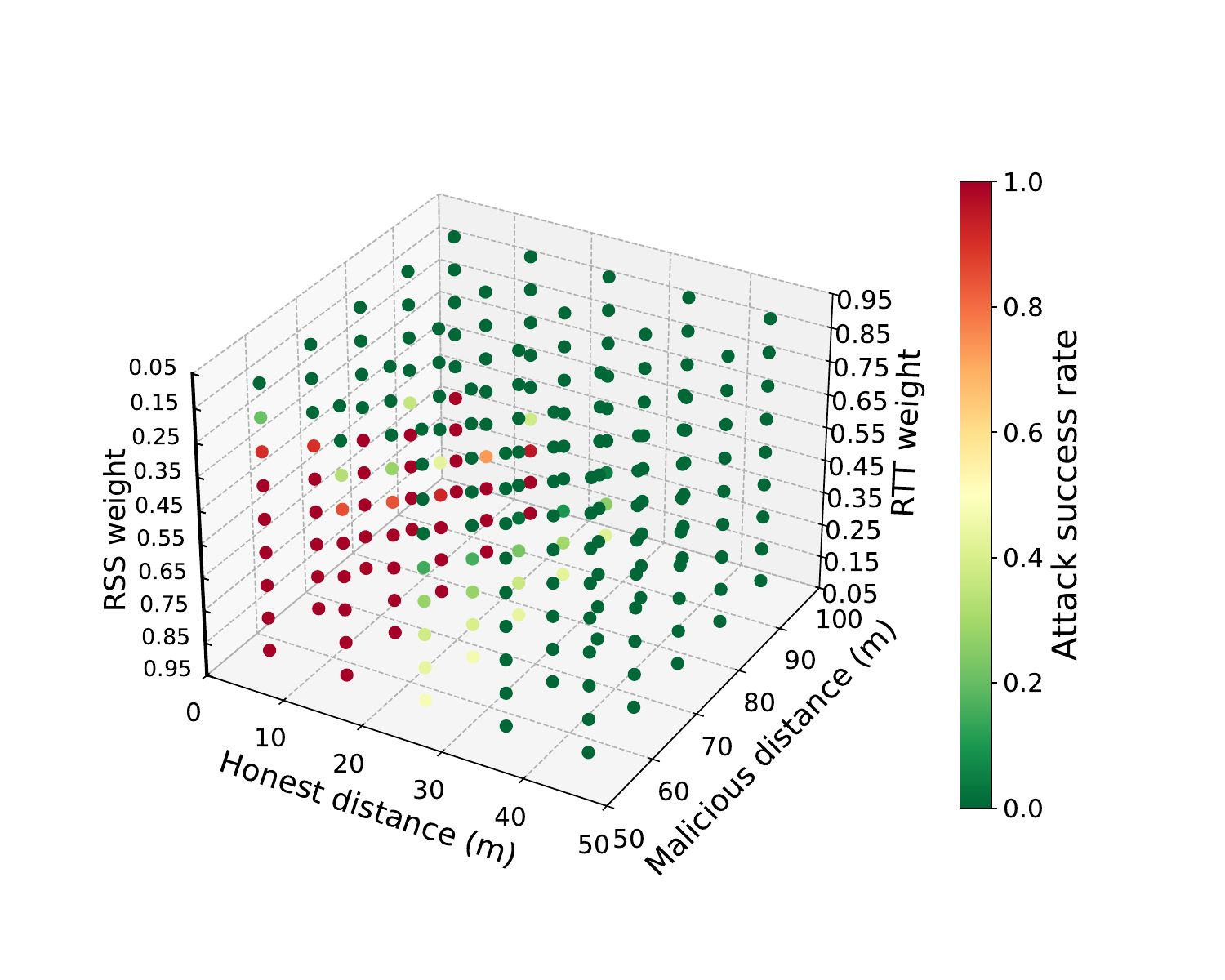}
    \caption{3D Scatter chart for attack success rate in the distance hijacking attack}
    \label{fig:Relay-3D-Scatter}
\end{figure}

\begin{table}[]
\caption{Parameters of distance hijacking simulations}
    \centering
    \resizebox{0.99\linewidth}{!}{
    \begin{tabular}{|c|c|c|} \hline
         \textbf{Parameter} & \textbf{Value Range} & \textbf{Value Step} \\ \hline
         Honest UE distance & 0 - 50 meters & 10 meters \\ \hline
         Malicious UE distance & 50 - 100 meters & 10 meters \\ \hline
         RTT weight in ProxVerify & 0.1 - 0.9 & 0.1 \\ \hline
    \end{tabular}
    }
    \label{tab:distance-highjacking-parameter-configuration}
\end{table}

To investigate the success of relay attacks under multiple factors, we vary a set of configuration parameters listed in Table~\ref{tab:distance-highjacking-parameter-configuration}. For each configuration, the experiment is repeated 100 times and the corresponding attack success rate is recorded. Figure~\ref{fig:Relay-3D-Scatter} presents a 3D scatter plot in which color encodes the attack success rate as a function of the honest UE distance to the gNodeB, the malicious UE distance, and the relative weighting of RTT versus RSS in the \textit{ProxVerify} function. The results show that attack success increases as the honest UE moves closer to the gNodeB and as the influence of RSS outweighs that of RTT. This observation aligns with earlier results, indicating that stronger emphasis on RTT improves resistance against distance hijacking. Moreover, as the honest UE approaches the gNodeB, the resulting higher RSS can be exploited by the malicious UE, further increasing the likelihood of a successful relay attack.


\textit{Distance Fraud:} In this scenario, we examine the DBP, which enables a client to authenticate using a nearby, already authenticated device when no access point is available. We consider one honest UE that is legitimately connected to the network and one malicious UE attempting to mislead the nearby device into accepting it as being closer than its actual distance. The DBP enforces proximity by executing multiple rounds of rapid challenge–response exchanges. In each round, the device must return a sequence of bits in the correct order and within a strict time bound determined by the measured RTT. 
In this setup, we initiate the distance fraud scenario by first placing an authenticated, honest UE in the network and then introducing a malicious UE at a random distance between 50 and 100 meters from it. Since it is not possible to exceed the speed of light, the malicious device cannot reduce RTT to appear closer. However, the adversary can guess the challenge bits in advance and respond before receiving them. If the guesses are correct, the malicious UE may succeed in authenticating despite being outside the allowed proximity range.



To model the DBP and characterize the attacker’s capabilities, we consider the following parameters:\vspace{-2mm} 
\begin{itemize}[leftmargin=*]
    \item \textbf{Rounds}: The total number of challenge bits transmitted by the verifier (i.e., the honest UE) to the malicious UE during the DBP execution. \vspace{-1.5mm} 
    
    \item \textbf{Tolerance}: The fraction of rounds in which a device is permitted to respond incorrectly while still being authenticated. For example, with a tolerance of 0.2 and 100 rounds, authentication succeeds if at least 80 rounds are answered correctly. \vspace{-1.5mm}
    
    
    \item \textbf{Guess Probability}: The probability that a malicious UE correctly predicts a challenge bit before receiving it. While this probability is $50\%$ under random guessing, we conservatively allow higher values to model an attacker with partial predictive advantage.
    
\end{itemize}



\begin{figure}
    \centering
    \includegraphics[width=0.85\linewidth,trim={1.5cm 0 0.7cm 0},clip]{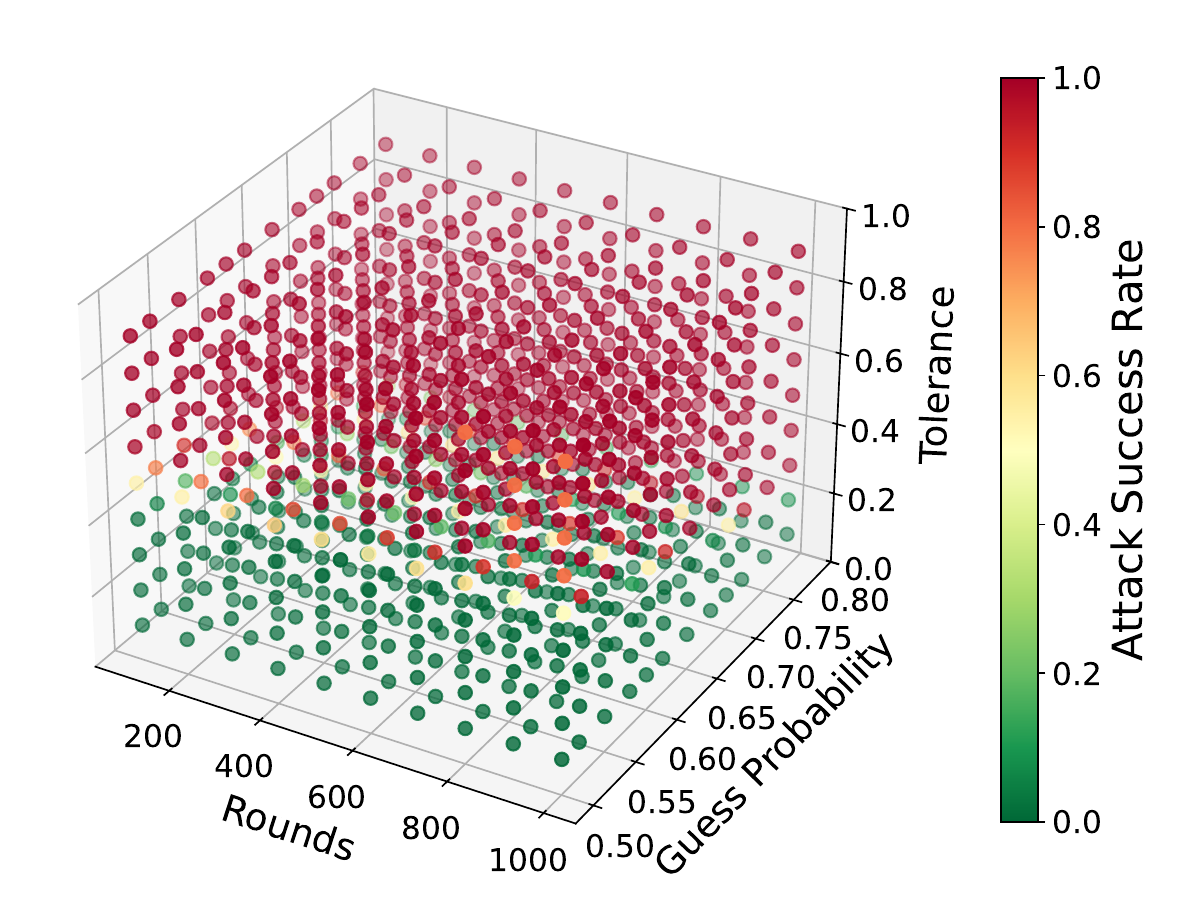}
    \caption{3D Scatter chart for attack success rate in the distance fraud attack}
    \label{fig:DistanceFraud-3D-Scatter}
\end{figure}

To evaluate the success probability of distance fraud attacks, Figure~\ref{fig:DistanceFraud-3D-Scatter} presents a 3D scatter plot in which color denotes the attack success rate as a function of the number of rounds, the attacker’s guess probability, and the DBP tolerance threshold. As expected, attack success increases with higher tolerance levels and greater guess probabilities, since both parameters favor the adversary. Nevertheless, when the tolerance is set to a reasonable value—such as $10\%$ or lower—the probability of a successful distance fraud attack becomes negligible.

\section{Related Work} \label{sec:discussion}

This section summarizes prior efforts and highlights key open problems along three major dimensions while describing our contributions beyond these efforts.

\textbf{(i) Location Privacy and Anonymity in DB-CRN:} In DB-CRNs, regulatory requirements mandate that unlicensed users (i.e., SUs) disclose precise location information, device characteristics, and spectrum requests to geolocation databases in order to obtain authorized channels and transmission parameters~\cite{agarwal2022survey}. This regulatory enforcement fundamentally conflicts with the principle of privacy, as users must sacrifice their privacy for spectrum access, thereby exposing mobility patterns and long-term behavioral profiles, which then enables tracking and profiling attacks~\cite{grissa2021anonymous}. Moreover, location privacy without anonymity or unlinkability remains insufficient, as stable identifiers combined with coarse location allow tracking and linkage across queries~\cite{newman2022spectrum}. Standard protections (e.g., TLS) do not prevent legitimate entities from observing, storing, and correlating sensitive query metadata~\cite{shi2021challenges}, either. Furthermore, the loss of anonymity expands the attack surface and can facilitate active threats such as selective DoS, targeted blocking, and user discrimination~\cite{weifang2010denial, darzi2024privacy}. 

Despite extensive research efforts, existing privacy-enhancing mechanisms remain inadequate for real-world DB-CRN deployments. Schemes relying on k-anonymity or pseudonymous identifiers~\cite{zhu2019lightweight} provide only weak protection and lack rigorous security guarantees unless very large anonymity sets are maintained—an assumption that rarely holds in dense, mobile, or highly dynamic environments. Techniques based on location obfuscation or differential privacy~\cite{ul2022differential}, while theoretically grounded, intentionally perturb location data and consequently impair the precision and dependability of spectrum availability decisions. Cryptographic approaches leveraging Private Information Retrieval (PIR)~\cite{grissa2021anonymous, xin2016privacy} offer stronger privacy guarantees but incur prohibitive computational and communication overhead, hindering scalability and practical deployment, while typically assuming benign user behavior and neglecting adversarial users.
Consequently, our proposed solution utilizes primitives such as DAC and RLRS to provide effective protection in DB-CRNs through lightweight and scalable mechanisms that jointly ensure location privacy, anonymity, and unlinkability, while remaining compatible with regulatory mandates and operating in resource-constrained environments. 


\textbf{(ii) Location Proof and Spoofing Resistance in DB-CRN:}  
DB-CRNs fundamentally operate as location-based systems in which spectrum availability decisions depend on real-time, user-provided location information stored in geolocation DBs. Consequently, 
spoofed or falsified locations undermine the basis of regulatory authorization and can cause disproportionate harm, as attackers may impersonate legitimate entities or manipulate location and usage data to obtain unauthorized spectrum access, resulting in spectrum interference, service disruption, and economic loss~\cite{nguyen2019spoofing}. 

While some approaches introduce location-proof mechanisms, they often rely on strong trust assumptions such as trusted infrastructure, specialized hardware, or dedicated location-verification servers, which are impractical in rural, ad hoc, or infrastructure-limited deployments~\cite{xin2016privacy}. The need for location verification becomes even more critical when combined with location privacy and anonymity, creating a fundamental paradox in DB-CRNs: revealing precise location information to prove correctness undermines privacy, while hiding location without verification enables spoofing. Thus, many existing solutions compromise user privacy by exposing fine-grained location data to verifiers, enabling tracking and linkage. These shortcomings highlight the necessity for a practical, privacy-preserving location verification framework that can authenticate user proximity without disclosing sensitive location data, while remaining robust against diverse spoofing and distance-fraud attacks under realistic DB-CRN deployment conditions~\cite{darzi2025slap}.

By utilizing primitives such as DBP, our proposed framework can thwart potential location spoofing attacks while preserving anonymity and unlinkability properties, which are essential for ensuring the privacy of users.

\textbf{(iii) DoS Countermeasures and Spectrum Management for Next-Gen Network Systems:}
The centralized geolocation databases in DB-CRNs, due to their frequent interaction with massive IoT deployments, result in high-value choke points whose unavailability can disrupt spectrum access at scale~\cite{jakimoski2008denial}. 
Due to the cost asymmetry, even weak attackers can overwhelm spectrum DBs with excessive or illegitimate queries, causing high verification and processing overhead for servers, which in turn degrades spectrum coordination and reduces service availability. 

Although various countermeasures have been proposed, including intrusion detection systems (IDSs), blockchain-based access control, and cryptographic Client Puzzle Protocols (CPPs), most existing defenses remain reactive and focus primarily on attack detection rather than prevention~\cite{darzi2024counter}. Such approaches often require continuous monitoring of traffic patterns, which incurs substantial operational overhead, and remain vulnerable to adversarial evasion~\cite{doriguzzi2024flad}. 
CPP-based defenses impose computational costs on clients prior to service access~\cite{bostanov2021client}, thereby limiting the rate of malicious requests. However, at large scale, they introduce new bottlenecks related to puzzle generation, distribution, and parallelization, which can burden both servers and legitimate users~\cite{ali2020foundations}. While outsourcing puzzle generation can alleviate server-side overhead~\cite{darzi2024privacy}, it merely shifts the attack surface to spectrum databases, which remain vulnerable to query flooding. 

To mitigate the aforementioned limitations, our proposed framework leverages the concept of VDF and offloads the cost of puzzle computation to attackers instead. With this approach, a practical DB-CRN deployment can achieve the required lightweight and proactive rate-limiting mechanisms that are tightly integrated into the spectrum query workflow, enabling scalable DoS resilience while preserving availability and minimizing overhead.

\section{Conclusion and Future Work}
\label{sec:conclusion} 
DB-CRNs improve spectrum utilization but introduce fundamental security and privacy challenges stemming from mandatory location disclosure, vulnerability to location spoofing, and exposure to DoS attacks on centralized infrastructure. 
In this work, we presented $\slap$, a unified and regulation-compliant security framework that simultaneously addresses these challenges under realistic adversarial assumptions. $\slap$ enables privacy-preserving and anonymous spectrum access, provides adaptive and verifiable location proofs without revealing precise user location, and ensures robust DoS resilience through verifiable delay functions combined with RLRS-based rate limiting. Comprehensive cryptographic evaluation and network simulations demonstrate that $\slap$ achieves significantly lower latency and communication overhead than existing approaches while effectively resisting spoofing and DoS attacks. As $\slap$ currently targets classical security, future work will focus on extending its resilience to quantum-capable adversaries and providing long-term post-quantum security guarantees for DB-CRNs.

\appendix

\section{APPENDIX} \label{sec:appendix_security}
\subsection{Security Proofs} \label{sec:appendix_security}
Here, we present the formal security proof of $\slap$.

\noindent\textbf{Theorem~1.} \textit{The $\slap$ framework achieves the following security guarantees: (i) anonymous user authentication, ensured by the anonymity, soundness, and unforgeability of the underlying ZKPoK and SPSEQ-UC signature schemes; (ii) location privacy, provided by the unlinkability of credential-based signatures; (iii) verifiable user location, enforced through either the unforgeability of $\rlrs$ combined with signal-based proximity measurements or the security of public-key distance-bounding protocols and anonymous credential delegation; and (iv) resilience to DoS attacks, enabled by $\vdf$ and rate-limiting mechanisms.}
%
%
%
\begin{proof}\vspace{-1.5mm}
{\em (i)} \textit{\underline{Client Anonymity and Untraceability}:} $\slap$ ensures client anonymity and untraceability via \emph{credential unforgeability} and \emph{credential anonymity} (Definitions~\ref{def:unforgeability} and~\ref{def:anonymity}). For any PPT adversary $\mathcal{A}$ interacting with issuance, delegation, and presentation oracles, the advantage in distinguishing which of two honest users generated a valid presentation satisfies $Adv_{\mathcal{A}}^{\mathrm{anon}}(\lambda) := \left| \Pr[b'=b] - \tfrac{1}{2} \right|
\le negl(\lambda)$, where $b$ is the challenger’s hidden bit selecting the presenting user. Otherwise, $\mathcal{A}$ can be reduced to either (i) breaking the knowledge soundness of the underlying ZKPoK, (ii) distinguishing re-randomized structure-preserving signatures, or (iii) distinguishing public-key–switched presentations, yielding a distinguisher for the Decisional Diffie–Hellman (DDH) problem in the underlying group $\mathbb{G}$. In particular, the SPSEQ-UC scheme guarantees \emph{origin-hiding}, \emph{derivation-privacy}, and \emph{conversion-privacy}~\cite{fuchsbauer2019structure, mir2023practical}, implying that randomized signatures, delegated commitment vectors, and key-switched credentials are computationally indistinguishable from fresh ones. These properties compose under repeated application, ensuring that credential presentations remain anonymous and unlinkable with all but negligible probability.


{\em (ii)} \textit{\underline{Client Location Privacy}:} 
$\slap$ preserves user location privacy through \emph{credential unlinkability} (Definition~\ref{def:unlinkability}). In particular, location information is protected by the unlinkability of signature--commitment pairs generated via the SPSEQ-UC scheme~\cite{mir2023practical}. The scheme supports signature re-randomization and public-key switching, ensuring that multiple presentations of the same underlying commitment yield computationally independent transcripts. Formally, for any PPT adversary $\mathcal{A}$ interacting with the proving oracle and observing a sequence of credential showings associated with the same location attribute $\ell$, the advantage of linking two presentations satisfies $Adv_{\mathcal{A}}^{\mathrm{link}}(\lambda)
:=\Big|\Pr[b'=b]-\tfrac{1}{2}\Big|
\le negl(\lambda)$, where $b$ denotes the challenge bit selecting which of two unlinkable presentations is shown. Otherwise, $\mathcal{A}$ can be used to distinguish re-randomized SPSEQ-UC signatures or correlate switched public keys, contradicting the unlinkability of SPSEQ-UC under the generic group model. Consequently, repeated spectrum queries and location proofs in $\slap$ remain computationally indistinguishable and unlinkable to verifiers, ensuring location privacy with all but negligible probability.


{\em (iii)} \textit{\underline{Location verification and Spoofing Resistance}:} 
In the infrastructure-assisted scenario, security follows from the core guarantees of the $\rlrs$, namely unforgeability, signer anonymity (even against the key generator), linkability, non-slanderability, and revocable-iff-linked accountability, all proven in the random oracle model. Correctness and soundness follow directly from Definition~\ref{def:locationverification}. Each valid location proof embeds a unique event identifier and a hidden signer exponent $s$, whose deterministic projection $S = u_0^s$ (with $u_0 = G_0(\mathsf{event})$) serves as a cryptographically binding link tag. This tag cannot be forged, altered, or duplicated without solving the Discrete Logarithm (DL) or Decisional Diffie–Hellman (DDH) problems in group $\mathbf{G}$, or the $q$-Strong Diffie–Hellman ($q$-SDH) problem in the bilinear groups $(\mathbf{G}_1,\mathbf{G}_2)$, yielding adversarial success probability at most $negl(\lambda)$. Linkability ensures that any two signatures issued by the same signer under the same event yield identical tags, making replay or proof reuse publicly detectable; conversely, producing unlinkable signatures from the same key induces a reduction to DL with non-negligible probability. Revocation is enabled iff linkability holds: two linked signatures allow algebraic extraction of $U = u_0^{H(\id)}$, which is mapped to the signer identity via accumulator soundness. Any attempt to revoke anonymity without a valid linkage reduces to breaking the $q$-SDH assumption.

In the ND scenario, location verification security is ensured by three orthogonal mechanisms. First, the $\aka$ protocol provides session key indistinguishability under the hardness of the Diffie–Hellman and discrete logarithm problems in the random oracle model, implying $\mathsf{Adv}^{\aka}_{\mathcal{A}}(\lambda)\leq negl(\lambda)$ for any PPT adversary $\mathcal{A}$. Second, proximity is enforced through a symmetric distance-bounding protocol $\dbp$, where the verifier samples ${m\in\{0,1\}^{2n}}$ and executes $n$ rapid challenge–response rounds; any adversary outside the allowed distance succeeds with probability at most $\Pr[\mathsf{Accept}] \le (3/4)^n$, which is negligible in $n$, providing information-theoretic resistance against distance fraud, mafia fraud, and distance hijacking. Finally, delegated location proofs are unforgeable: any adversary producing a credential embedding a forged proof $\pol^\star$ can be reduced to either forging the SPSEQ-UC structure-preserving signature or violating the soundness of the underlying NIZK proof system, yielding $\Pr[\mathsf{Forge}(\pol^\star)] \leq \mathsf{Adv}^{\mathsf{EUF}}*{\mathsf{SPSEQ\mbox{-}UC}}(\lambda)+\mathsf{Adv}^{\mathsf{ZK}}_{\mathsf{NIZK}}(\lambda)\leq negl(\lambda)$ under the DDH assumption. Together, these guarantees ensure correct, privacy-preserving, and accountable location verification while providing strong resistance against spoofing attacks and collusion under realistic adversarial conditions.

{\em (iv)} \textit{\underline{Authenticated and Comprehensive Counter-DoS}:} 
In $\slap$, the authenticity of $\vdf$ tokens is enforced through the existential unforgeability of the underlying digital signature scheme $\sgn$ under chosen-message attacks (EUF-CMA). When instantiated with \texttt{ECDSA}, security reduces to the hardness of the Elliptic Curve Discrete Logarithm Problem (EC-DLP). Formally, for any PPT adversary $\mathcal{A}$ issuing at most $q(\lambda)$ signing queries, ${\Pr\!\big[\sgn.\verify(\pk,m^\star,\sigma^\star)=}$ ${1 \;\wedge\; m^\star\notin\mathcal{Q}\big]
\;\le\; Adv^{\mathrm{EUF\mbox{-}CMA}}_{\mathrm{ECDSA}}(\lambda)
\;\le\; negl(\lambda)}$, where $\mathcal{Q}$ denotes the set of messages signed by the legitimate issuer. Any successful forgery yields an $\texttt{ECDSA}$ forger and thus an EC-DLP solver with non-negligible probability, contradicting the assumption.

Server-side DoS resilience is achieved using a $\vdf$ instantiated in an RSA group $\mathbb{G}$ of unknown order (e.g., modulo $N=pq$). For input $x$ and delay parameter $T$, the prover computes $y=g^{2^T}$ together with a succinct proof $\pi$, while verification requires $O(\log T)$ time. The security of the $\vdf$ relies on: (i) \emph{inherent sequentiality}, as computing $g^{2^T}$ requires $T$ dependent squarings; (ii) \emph{non-parallelizability}, since no asymptotic speedup is possible in groups of unknown order; (iii) \emph{soundness}, where producing a valid proof $(y^\ast,\pi^\ast)$ with $y^\ast\neq g^{2^T}$ implies breaking the Strong RSA assumption; and (iv) \emph{uniqueness}, ensuring a single valid output for fixed $(x,T)$. 
Let $\tau$ denote the per-request $\vdf$ evaluation cost. For any PPT adversary $\mathcal{A}$ issuing at most $q(\lambda)$ requests, $\Pr\![\exists\, (y^\ast,\pi^\ast): \verify(x,T,y^\ast,\pi^\ast)=1 \ \wedge\ \mathsf{Cost}(\mathcal{A})<q($ $\lambda)\cdot\tau]
\le negl(\lambda)$, since otherwise $\mathcal{A}$ computes valid $\vdf$ outputs without incurring $\Omega(\tau)$ sequential work per accepted request, yielding a Strong RSA solver. Thus, accepted requests impose a provable per-request computational cost on clients while keeping verification efficient, establishing strong asymmetric DoS resistance.

To protect $\psd$s against query flooding, $\slap$ enforces rate limiting via event-scoped linkability provided by $\rlrs$ signatures embedded in location proofs. Each valid signature implicitly contains a secret exponent $s$ (hidden via zero-knowledge proofs) and a deterministic link tag $S=u_0^{\,s}$, where $u_0=G_0(\mathrm{event})$. The tag $S$ is deterministic, event-specific, and signer-unique. For any PPT adversary $\mathcal{A}$, $\Pr\![\exists\, \sigma_1,\sigma_2:\ \verify(\sigma_1)=\verify(\sigma_2)=1 \ \wedge\ S(\sigma_1)$ $=S(\sigma_2)\ \wedge\ s_1\neq s_2]
\le negl(\lambda)$, since $u_0^{s_1}=u_0^{s_2}$ with $s_1\neq s_2$ yields a solution to the Discrete Logarithm problem. Hence, multiple signatures issued by the same signer under the same event are publicly linkable, bounding the number of valid queries by the number of distinct ring members (e.g., APs) in the region. Moreover, \emph{revocable-iff-linked} accountability is guaranteed: only when two signatures are linked can the signer’s identity be extracted via $U=u_0^{H(\id)}$, with correctness ensured by accumulator soundness. Any attempt to revoke anonymity without linkability reduces to breaking the $q$-Strong Diffie–Hellman (q-SDH) assumption. 

By combining EUF-CMA authenticated $\vdf$ tokens, $\vdf$-enforced sequential computational cost, and $\rlrs$-based event-scoped rate limiting, $\slap$ ensures authenticated request handling, provable per-request hardness, enforced rate limits, and cryptographic accountability. Any adversary that violates availability without incurring proportional cost would yield a solver for EC-DLP, Strong RSA, or DL/DDH/q-SDH with non-negligible probability, contradicting standard hardness assumptions.
\end{proof}

\printcredits

\section*{Acknowledgment}
This work has received co-funding from the NSF and SNSF, in the framework of the SATUQ project under SNSF (Grant No 10000409) and NSF (ECCS 2444615).


\bibliographystyle{cas-model2-names}  
\bibliography{SalehRef}             

@inproceedings{caleffi2014database,
  title={Database access strategy for TV white space cognitive radio networks},
  author={Caleffi, Marcello and Cacciapuoti, Angela Sara},
  booktitle={2014 Eleventh Annual IEEE International Conference on Sensing, Communication, and Networking Workshops (SECON Workshops)},
  pages={34--38},
  year={2014},
  organization={IEEE}
}

@misc{das2015rfc,
  title={RFC 7545: Protocol to Access White-Space (PAWS) Databases},
  author={Das, S and Zhu, L and Malyar, J and McCann, P},
  year={2015},
  publisher={RFC Editor}
}

@techreport{chen2015protocol,
  title={Protocol to access white-space (paws) databases},
  author={Chen, V and Das, S and Zhu, L and Malyar, J and McCann, P},
  year={2015}
}

@article{order2023federal,
  title={Federal Communications Commission},
  author={Order, Declaratory},
  year={2023}
}

@article{zeng2019efficient,
  title={An efficient privacy-preserving protocol for database-driven cognitive radio networks},
  author={Zeng, Yali and Xu, Li and Yang, Xu and Yi, Xun},
  journal={Ad hoc networks},
  volume={90},
  pages={101739},
  year={2019},
  publisher={Elsevier}
}

@inproceedings{vaudenay2015private,
  title={Private and secure public-key distance bounding: application to NFC payment},
  author={Vaudenay, Serge},
  booktitle={International Conference on Financial Cryptography and Data Security},
  pages={207--216},
  year={2015},
  organization={Springer}
}

@inproceedings{jakimoski2008denial,
  title={Denial-of-service attacks on dynamic spectrum access networks},
  author={Jakimoski, G and Subbalakshmi, KP},
  booktitle={IEEE International Conference on Communications Workshops},
  pages={524--528},
  year={2008},
  organization={IEEE}
}

@article{abuyaghi2024positioning,
  title={Positioning in 5g networks: Emerging techniques, use cases, and challenges},
  author={Abuyaghi, Mohammad and Si-Mohammed, Samir and Shaker, George and Rosenberg, Catherine},
  journal={IEEE Internet of Things Journal},
  year={2024},
  publisher={IEEE}
}

@article{lopez2019ieee,
  title={IEEE 802.11 be extremely high throughput: The next generation of Wi-Fi technology beyond 802.11 ax},
  author={Lopez-Perez, David and Garcia-Rodriguez, Adrian and Galati-Giordano, Lorenzo and Kasslin, Mika and Doppler, Klaus},
  journal={IEEE Communications Magazine},
  volume={57},
  number={9},
  pages={113--119},
  year={2019},
  publisher={IEEE}
}

@inproceedings{rosler2025improving,
  title={Improving WiFi Ranging Through Frequency Diversity and Mobility},
  author={R{\"o}sler, Sascha and Eising, Nils and Zubow, Anatolij and Dressler, Falko},
  booktitle={2025 IEEE 50th Conference on Local Computer Networks (LCN)},
  pages={1--7},
  year={2025},
  organization={IEEE}
}

@inproceedings{gao2012location,
  title={Location privacy leaking from spectrum utilization information in database-driven cognitive radio network},
  author={Gao, Zhaoyu and Zhu, Haojin and Liu, Yao and Li, Muyuan and Cao, Zhenfu},
  booktitle={Proceedings of the 2012 ACM conference on Computer and communications security},
  pages={1025--1027},
  year={2012}
}

@inproceedings{bahrak2014protecting,
  title={Protecting the primary users' operational privacy in spectrum sharing},
  author={Bahrak, Behnam and Bhattarai, Sudeep and Ullah, Abid and Park, Jung-Min Jerry and Reed, Jeffery and Gurney, David},
  booktitle={2014 IEEE International Symposium on Dynamic Spectrum Access Networks (DYSPAN)},
  pages={236--247},
  year={2014},
  organization={IEEE}
}

@inproceedings{boneh2018verifiable,
  title={Verifiable delay functions},
  author={Boneh, Dan and Bonneau, Joseph and B{\"u}nz, Benedikt and Fisch, Ben},
  booktitle={Annual international cryptology conference},
  pages={757--788},
  year={2018},
  organization={Springer}
}

@article{wesolowski2020efficient,
  title={Efficient verifiable delay functions},
  author={Wesolowski, Benjamin},
  journal={Journal of Cryptology},
  volume={33},
  number={4},
  pages={2113--2147},
  year={2020},
  publisher={Springer}
}

@article{au2013secure,
  title={Secure ID-based linkable and revocable-iff-linked ring signature with constant-size construction},
  author={Au, Man Ho and Liu, Joseph K and Susilo, Willy and Yuen, Tsz Hon},
  journal={Theoretical Computer Science},
  volume={469},
  pages={1--14},
  year={2013},
  publisher={Elsevier}
}

@article{darzi2025slap,
  title={SLAP: Secure Location-proof and Anonymous Privacy-preserving Spectrum Access},
  author={Darzi, Saleh and Yavuz, Attila A},
  journal={2025 Silicon Valley Cybersecurity Conference (SVCC)},
pages = {1-8},
  year={2025},
  publisher = {IEEE}
}

@article{mir2023practical,
  title={Practical delegatable anonymous credentials from equivalence class signatures},
  author={Mir, Omid and Slamanig, Daniel and Bauer, Balthazar and Mayrhofer, Ren{\'e}},
  journal={Proceedings on Privacy Enhancing Technologies},
  year={2023}
}

@article{grissa2021anonymous,
  title={Anonymous dynamic spectrum access and sharing mechanisms for the CBRS band},
  author={Grissa, Mohamed and Yavuz, Attila Altay and Hamdaoui, Bechir and Tirupathi, Chittibabu},
  journal={IEEE Access},
  volume={9},
  pages={33860--33879},
  year={2021},
  publisher={IEEE}
}

@inproceedings{grissa2016efficient,
  title={An efficient technique for protecting location privacy of cooperative spectrum sensing users},
  author={Grissa, Mohamed and Yavuz, Attila and Hamdaoui, Bechir},
  booktitle={2016 IEEE conference on computer communications workshops (INFOCOM WKSHPS)},
  pages={915--920},
  year={2016},
  organization={IEEE}
}

@article{ali2020foundations,
  title={Foundations, properties, and security applications of puzzles: A survey},
  author={Ali, Isra Mohamed and Caprolu, Maurantonio and Pietro, Roberto Di},
  journal={ACM Computing Surveys (CSUR)},
  volume={53},
  number={4},
  pages={1--38},
  year={2020},
  publisher={ACM New York, NY, USA}
}

@article{chakraborty2023capow,
  title={Capow: Context-aware ai-assisted proof of work based ddos defense},
  author={Chakraborty, Trisha and Mitra, Shaswata and Mittal, Sudip},
  journal={arXiv preprint},
  year={2023}
}

@article{bostanov2021client,
  title={Client Puzzle Protocols as Countermeasure Against Automated Threats to Web Applications},
  author={Bostanov, Vladimir},
  journal={IEEE Access},
  volume={9},
  year={2021},
  publisher={IEEE}
}

@article{doriguzzi2024flad,
  title={FLAD: adaptive federated learning for DDoS attack detection},
  author={Doriguzzi-Corin, Roberto and Siracusa, Domenico},
  journal={Computers \& Security},
  volume={137},
  year={2024},
  publisher={Elsevier}
}

@article{li2015privacy,
  title={Privacy-preserving location proof for securing large-scale database-driven cognitive radio networks},
  author={Li, Yi and Zhou, Lu and Zhu, Haojin and Sun, Limin},
  journal={IEEE Internet of Things Journal},
  volume={3},
  number={4},
  pages={563--571},
  year={2015},
  publisher={IEEE}
}

@article{zhu2019lightweight,
  title={Lightweight Privacy Preservation for Securing Large-Scale Database-Driven Cognitive Radio Networks with Location Verification},
  author={Zhu, Rui and Xu, Li and Zeng, Yali and Yi, Xun},
  journal={Security and Communication Networks},
  year={2019},
  publisher={Wiley Online Library}
}

@inproceedings{troja2014leveraging,
  title={Leveraging p2p interactions for efficient location privacy in database-driven dynamic spectrum access},
  author={Troja, Erald and Bakiras, Spiridon},
  booktitle={22nd ACM SIGSPATIA Conference},
  pages={453--456},
  year={2014}
}

@inproceedings{darzi2024privacy,
  title={Privacy-Preserving and Post-Quantum Counter Denial of Service Framework for Wireless Networks},
  author={Darzi, Saleh and Yavuz, Attila Altay},
  booktitle={IEEE Military Communications Conference (MILCOM)},
  pages={1076--1081},
  year={2024},
  organization={IEEE}
}

@inproceedings{xin2016privacy,
  title={Privacy-preserving spectrum query with location proofs in database-driven CRNs},
  author={Xin, Jiajun and Li, Ming and Luo, Changqing and Li, Pan},
  booktitle={2016 IEEE Global Communications Conference (GLOBECOM)},
  pages={1--6},
  year={2016},
  organization={IEEE}
}

@article{grissa2019location,
  title={Location privacy in cognitive radios with multi-server private information retrieval},
  author={Grissa, Mohamed and Yavuz, Attila Altay},
  journal={IEEE Transactions on Cognitive Communications and Networking},
  year={2019},
  publisher={IEEE}
}

@inproceedings{darzi2024counter,
  title={Counter denial of service for next-generation networks within the artificial intelligence and post-quantum era},
  author={Darzi, Saleh and Yavuz, Attila A},
  booktitle={IEEE 6th International Conference on Trust, Privacy and Security in Intelligent Systems, and Applications},
  pages={138--147},
  year={2024},
  organization={IEEE}
}

@article{shi2021challenges,
  title={Challenges and new directions in securing spectrum access systems},
  author={Shi, Shanghao and Xiao, Yang and Lou, Wenjing and Wang, Chonggang and Li, Xu and Hou, Y Thomas and Reed, Jeffrey H},
  journal={IEEE Internet of Things Journal},
  volume={8},
  number={8},
  pages={6498--6518},
  year={2021},
  publisher={IEEE}
}

@inproceedings{kilincc2016efficient,
  title={Efficient public-key distance bounding protocol},
  author={K{\i}l{\i}n{\c{c}}, Handan and Vaudenay, Serge},
  booktitle={22nd International Conference on the Theory and Application of Cryptology and Information Security},
  pages={873--901},
  organization={Springer}
}

@article{fuchsbauer2019structure,
  title={Structure-preserving signatures on equivalence classes and constant-size anonymous credentials},
  author={Fuchsbauer, Georg and Hanser, Christian and Slamanig, Daniel},
  journal={Journal of Cryptology},
  volume={32},
  pages={498--546},
  year={2019},
  publisher={Springer}
}

@article{nguyen2019spoofing,
  title={Spoofing attack and surveillance game in geo-location database driven spectrum sharing},
  author={Nguyen-Thanh, Nhan and Ta, Duc-Tuyen and Nguyen, Van-Tam},
  journal={IET Communications},
  volume={13},
  number={1},
  pages={74--84},
  year={2019},
  publisher={Wiley Online Library}
}

@article{ul2022differential,
  title={Differential privacy in cognitive radio networks: a comprehensive survey},
  author={Ul Hassan, Muneeb and Rehmani, Mubashir Husain and Rehan, Maaz and Chen, Jinjun},
  journal={Cognitive Computation},
  volume={14},
  number={2},
  pages={475--510},
  year={2022},
  publisher={Springer}
}

@inproceedings{jasim2021cognitive,
  title={Cognitive radio network: Security and reliability trade-off-status, challenges, and future trend},
  author={Jasim, Doaa K and Sadkhan, Sattar B},
  booktitle={2021 1st Babylon International Conference on Information Technology and Science (BICITS)},
  pages={149--153},
  year={2021},
  organization={IEEE}
}

@article{agarwal2022survey,
  title={A survey on citizens broadband radio service (cbrs)},
  author={Agarwal, Pranay and Manekiya, Mohammedhusen and Ahmad, Tahir},
  journal={Electronics},
  volume={11},
  number={23},
  year={2022},
  publisher={MDPI}
}

@misc{openssl,
	title = {OpenSSL Library},
	howpublished = {\url{https://www.openssl.org/}},
	note = {Accessed: April, 2024},
	year={2024}
}

@inproceedings{weifang2010denial,
  title={Denial of service attacks in cognitive radio networks},
  author={Weifang, Wang},
  booktitle={2010 The 2nd Conference on Environmental Science and Information Application Technology},
  volume={2},
  pages={530--533},
  year={2010},
  organization={IEEE}
}

@incollection{thippeswamy2016physical,
  title={Physical layer, data link layer, network layer, transport layer, and application layer in cognitive radio networks},
  author={Thippeswamy, MN and Prasanna, A Dinesh and Takawira, F},
  booktitle={Introduction to Cognitive Radio Networks and Applications},
  pages={187--200},
  year={2016},
  publisher={Chapman and Hall/CRC}
}

@article{liang2011cognitive,
  title={Cognitive radio networking and communications: An overview},
  author={Liang, Ying-Chang and Chen, Kwang-Cheng and Li, Geoffrey Ye and Mahonen, Petri},
  journal={IEEE transactions on vehicular technology},
  volume={60},
  number={7},
  pages={3386--3407},
  year={2011},
  publisher={IEEE}
}

@article{newman2022spectrum,
  title={Spectrum: High-bandwidth anonymous broadcast with malicious security},
  author={Newman, Zachary and Servan-Schreiber, Sacha and Devadas, Srinivas},
  year={2022},
  publisher={The USENIX Association}
}

@article{Wang2007,
    author = {Wang, Yang and Lin, Chuang and Li, Quan-Lin and Fang, Yuguang},
    title = {A queueing analysis for the denial of service (DoS) attacks in computer networks},
    year = {2007},
    issue_date = {August, 2007},
    publisher = {Elsevier North-Holland, Inc.},
    address = {USA},
    volume = {51},
    number = {12},
    issn = {1389-1286},
    journal = {Comput. Netw.},
    month = aug,
    pages = {3564–3573},
    numpages = {10}
}

@misc{3GPPChannelModel,
    title={Study on channel model for frequencies from 0.5 to 100 {GHz}},
    year={2025},
    howpublished={\url{https://portal.3gpp.org/desktopmodules/Specifications/SpecificationDetails.aspx?specificationId=3173}},
    note={Accessed: 2025, October}
}

@misc{RFC_MTU,
    series =    {Request for Comments},
    number =    4638,
    howpublished =  {RFC 4638},
    publisher = {RFC Editor},
    author =    {Mike Duckett and Jerome Moisand and Tom Anschutz and Diamantis Kourkouzelis and Peter Arberg},
    title =     {{Accommodating a Maximum Transit Unit/Maximum Receive Unit (MTU/MRU) Greater Than 1492 in the Point-to-Point Protocol over Ethernet (PPPoE)}},
    pagetotal = 9,
    year =      2006,
    month =     sep,
}

@misc{NS3,
    
title={{NS3 Network Simulator}},
    year={2025},
    howpublished={\url{https://www.nsnam.org/}},
    note={Accessed: 2026, January}
}

@article{5G_LENA,
    title = {An {E2E} simulator for {5G} {NR} networks},
    journal = {Simulation Modelling Practice and Theory},
    volume = {96},
    pages = {101933},
    year = {2019},
    issn = {1569-190X},
    author = {Natale Patriciello and Sandra Lagen and Biljana Bojovic and Lorenza Giupponi}
}

@misc{FCC_Thresholds,
    author={FCC},
    title={{FCC Regulations: 47 CFR 96.39}},
    year={2025},
    howpublished={\url{https://www.ecfr.gov/current/title-47/chapter-I/subchapter-D/part-96/subpart-E/section-96.39}},
    note={Accessed: 2025, December}
}

@ARTICLE{9321158,
  author={Gür, Gürkan},
  journal={Journal of Communications and Networks}, 
  title={Expansive networks: Exploiting spectrum sharing for capacity boost and 6G vision}, 
  year={2020},
  volume={22},
  number={6},
  pages={444-454},
  doi={10.23919/JCN.2020.000037}}

@article{PATIL2024110697,
title = {A comprehensive survey on spectrum sharing techniques for {5G/B5G} intelligent wireless networks: Opportunities, challenges and future research directions},
journal = {Computer Networks},
volume = {253},
year = {2024},
issn = {1389-1286},
author = {Anita Patil and Sridhar Iyer and Onel L.A. López and Rahul J. Pandya and Krishna Pai and Anshuman Kalla and Rakhee Kallimani}
}

\end{document}